\newtheorem{lemma}{Lemma}
\newtheorem{theorem}{Theorem}
\newenvironment{proof}%
{\begin{trivlist}\item[\hspace*{\labelsep}{\it Proof.\/}]}%
{\hfill$\Box$\end{trivlist}}
\newcommand{\head}[1]
 {\markright{\hbox to 0pt{\vtop to 0pt{\hbox{}\vskip 3mm \hrule
 width  \textwidth \vss} \hss}{\sc #1}}}
\begin{document}

\title{\bf Approximating the optimal competitive ratio for an ancient online scheduling problem}

\author{Lin Chen$^1$\ \  Deshi Ye$^1$\ \ Guochuan
Zhang$^1$
\\{\small $^1$College of Computer Science, Zhejiang University, Hangzhou, 310027, China
}
\\{\small chenlin198662@zju.edu.cn, zgc@zju.edu.cn}
 }
\date{}
\maketitle

\begin{abstract}
We consider the classical online scheduling problem $P||C_{max}$ in
which jobs are released over list and provide a nearly optimal
online algorithm. More precisely, an online algorithm whose
competitive ratio is at most $(1+\epsilon)$ times that of an optimal
online algorithm could be achieved in polynomial time, where $m$,
the number of machines, is a part of the input. It substantially
improves upon the previous results by almost closing the gap between
the currently best known lower bound of 1.88~\cite{Rudin01} and the
best known upper bound of 1.92~\cite{FleWah00}. It has been known by
folklore that an online problem could be viewed as a game between an
adversary and the online player. Our approach extensively explores
such a structure and builds up a completely new framework to show
that, for the online over list scheduling problem, given any
$\epsilon>0$, there exists a uniform threshold $K$ which is
polynomial in $m$ such that if the competitive ratio of an online
algorithm is $\rho\le 2$, then there exists a list of at most $K$
jobs to enforce the online algorithm to achieve a competitive ratio
of at least $\rho-O(\epsilon)$. Our approach is substantially
different from that of~\cite{Gunther2013soda}, in which an
approximation scheme for online over time scheduling problems is
given, where the number of machines is fixed. Our method could also
be extended to several related online over list scheduling models.

\bigskip
\smallskip\noindent{\bf Keywords:} {Competitive analysis; Online
scheduling; Dynamic programming.}
\end{abstract}

\baselineskip 15pt
\section{Introduction}
Very recently G\"{u}nther et al.~\cite{Gunther2013soda} come up with
a nice notion called {\em Competitive ratio approximation scheme}
for online problems. Formally speaking, it is a series of online
algorithms $\{A_{\epsilon}:\epsilon>0\}$, where $A_{\epsilon}$ has a
competitive ratio at most $(1+\epsilon)$ times the optimal
competitive ratio. Naturally, a competitive ratio approximation
scheme could be seen as an online version of the PTAS (polynomial
time approximation scheme) for the offline problems. Using such a
notion, they provide nearly optimal online algorithms for several
online scheduling problems where jobs arrive over time, including
$Qm|r_j,(pmtn)|\sum w_jc_j$ as well as $Pm|r_j|C_{max}$, where $m$
is the number of machines. The algorithm runs in polynomial time
when $m$ is fixed.

That is a great idea for designing nearly optimal online algorithms,
that motivates us to revisit the classical online problems which
still have a gap between upper and lower bounds. However, the
technique of G\"{u}nther et al.~\cite{Gunther2013soda} heavily
relies on the structure of the optimal solution for the over time
scheduling problem, through which they can focus on jobs released
during a time window of a constant length. It thus seems hard to
generalize to other online models.

Clearly, the first online scheduling problem which should be
revisited is $P||C_{max}$, a fundamental problem in which jobs are
released over list. This ancient scheduling model admits a simple
algorithm called $LS$ (list scheduling)~\cite{Graham66}. Its
competitive ratio is $2-1/m$ that achieves the best possible for
$m=2,3$~\cite{FaKeTu89}. Nevertheless, better algorithms exist for
$m=4,5,6,7$, see~\cite{ChVlWo94A}~\cite{GalWoe93B}~\cite{RudCha03}
for upper and lower bounds for online scheduling problems where $m$
taking these specified values. Many more attentions are paid to the
general case where $m$ is arbitrary. There is a long list of
improvements on upper and lower bounds,
see~\cite{Albers99}~\cite{BaFiKV95}~\cite{KaPhTo94} for improvements
on competitive algorithms,
and~\cite{Albers99}~\cite{BaKaRa94}~\cite{gormley2000lb} for
improvements on lower bounds. Among them the currently best known
upper bound is $1+\sqrt{\frac{1+\ln2}{2}} \approx
1.9201$~\cite{FleWah00}, while the best known lower bound is
1.88~\cite{Rudin01}. We refer the readers to~\cite{Sgall98} for a
nice survey on this topic.

Although the gap between the upper and lower bounds are relatively
small, it leaves a great challenge to close it. In this paper we
tackle this classical problem by providing a competitive ratio
approximation scheme. The running time is polynomial in the input
size. More precisely, the time complexity related to $m$ is
$O(m^{\Lambda})$ where $\Lambda=2^{O(1/\epsilon^2\log^2
(1/\epsilon))}$. It is thus polynomial even when the number of
machines is a part of the input.

To simplify the notion, throughout this paper we use {\em
competitive scheme} instead of competitive ratio approximation
scheme.

\paragraph{\bf General Ideas} We try to give a full picture of
our techniques. Given any $\epsilon>0$, at any time it is possible
to choose a proper value (called a scaling factor) and scale all the
jobs released so far such that there are only a constant number of
different kinds of jobs. We then represent the jobs scheduled on
each machine by a tuple (called a trimmed-state) in which the number
of each kind of jobs remains unchanged. Composing the trimmed-states
of all machines forms a trimmed-scenario and the number of different
trimmed-scenarios we need to consider is a polynomial in $m$,
subject to the scaling factors.

Given a trimmed-scenario, we can compute the corresponding
approximation ratio (comparing with the optimal schedule), which is
called an {\em instant approximation ratio}. Specifically, if the
schedule arrives at a trimmed-scenario when the adversary stops,
then the competitive ratio equals to the instant approximation ratio
of this trimmed-scenario. Formal definitions will be given in the
next section. Note that the instant approximation ratio of every
trimmed-scenario could be determined (up to an error of
$O(\epsilon)$) regardless of the scaling factor.

To understand our approach easily we consider the online scheduling
problem as a game. Each time the adversary and the scheduler take a
move, alternatively, i.e., the adversary releases a job and the
online scheduler then assigns the job to a machine. It transfers the
current trimmed-scenario into a new one. Suppose the adversary wins
the game by leading it into a certain trimmed-scenario with an
instant approximation ratio $\rho$, forcing the competitive ratio to
be at least $\rho$. The key observation is that if he has a winning
strategy, he would have a winning strategy of taking only a
polynomial number (in $m$) of moves since the game itself consists
of only a polynomial number of distinct trimmed-scenarios. A
rigorous proof for such an observation relies on formulating the
game into a layered graph and associating the scheduling of any
online algorithm with a path in it. Given the observation, the
online problem asks if the adversary has a winning strategy of
$C=poly(m)$ moves, starting from a trimmed-scenario where there is
no job. Such a problem could be solved via dynamic programming,
which decomposes it into a series of subproblems that ask whether
the adversary has a winning strategy of $C'<C$ moves, starting from
an arbitrary trimmed-scenario.

Various extensions could be built upon this framework. Indeed,
competitive schemes could be achieved for $Rm||C_{max}$ and
$Rm||\sum_i C_i^p$ where $p\ge 1$ is some constant and $C_i$ is the
completion time of machine $i$. The running times of these schemes
are polynomial when $m$ is a constant.

In addition to competitive schemes, it is interesting to ask if we
can achieve an optimal online algorithm. We consider the semi-online
model $P|p_j\le q|C_{max}$, where all job processing times are
bounded. We are able to design an optimal online algorithm running
in $(mq)^{O(mq)}$ time. It is exponential in both $m$ and $q$.

Recall that the competitive ratio of list scheduling for
$P||C_{max}$ is $2-1/m$. Throughout the paper we focus on online
algorithms whose competitive ratio is no more than $2$. We assume
that $m\ge 2$.

\section{Structuring Instances}
To tackle the online scheduling problem, similarly as the offline
case we want to well structure the input instance subject to an
arbitrarily small loss. However, in the online setting we are not
aware of the whole input. The instance needs scaling in a dynamic
way.

Given any $0<\epsilon\le 1/4$, we may assume that all the jobs
released have a processing time of $(1+\epsilon)^j$ for some integer
$j\ge 0$. Let $c_0$ be the smallest integer such that
$(1+\epsilon)^{c_0}\ge 1/\epsilon$. Let $\omega$ be the smallest
integer such that $(1+\epsilon)^{\omega}\ge 3$. Let
$SC=\{(1+\epsilon)^{j\omega}|j\ge 0, j\in \mathbb{N}\}$.

Consider the schedule of $n$ ($n\ge 1$) jobs by any online
algorithm. Let $p_{max}=\max_j\{p_j\}$. Then
$LB=\max\{\sum_{j=1}^np_j/m, p_{max}\}$ is a trivial lower bound on
the makespan. We choose $T_{LB}\in SC$ such that $T_{LB}\le
LB<T_{LB}(1+\epsilon)^{\omega}$, and define job $j$ as a small job
if $p_j\le T_{LB}(1+\epsilon)^{-c_0}$, and a big job otherwise.
$T_{LB}$ is called the {\em scaling factor} of this schedule.

Let $L_h^s$ be the load (total processing time) of small jobs on
machine $h$. An $(\omega+c_0+1)$-tuple
$st_h=(\eta_{-c_0}^h,\eta_{-c_0+1}^h,\cdots,\eta_{\omega}^h)$ is
used to represent the jobs scheduled on machine $h$, where
$\eta_i^h$ ($-c_0+1\le i\le \omega$) is the number of big jobs with
processing time $T_{LB}(1+\epsilon)^{i}$ on machine $h$, and
$\eta_{-c_0}^h=L^h_s/(T_{LB}(1+\epsilon)^{-c_0})$. We call such a
tuple as a state (of machine $h$). The first coordinate of a state
might be fractional, while the other coordinates are integers. The
load of a state is defined as
$LD(st_h)=\sum_{i=-c_0}^{\omega}(1+\epsilon)^i\eta_i\le 4LB$.

Composing the states of all machines forms a scenario
$\psi=(st_1,st_2,\cdots,st_m)$. Thus, any schedule could be
represented by $(T_{LB},\psi)$ where $T_{LB}\in SC$ is the scaling
factor of the schedule. Specifically, if the adversary stops now,
then the competitive ratio of such a schedule is approximately (up
to an error of $O(\epsilon)$):
$$\rho(\psi)={C_{max}(\psi)}/{OPT(\psi)}$$
where $C_{max}(\psi)=\max_{j}LD(st_j)$, and $OPT(\psi)$ is the
makespan of an optimal solution for the offline scheduling problem
in which jobs of $\psi$ are taken as an input (here small jobs are
allowed to split). We define $LD(\psi)=\sum_h LD(st_h)$ and
$P_{max}(\psi)$ the largest processing time (divided by $T_{LB}$) of
jobs in $\psi$ ($P_{max}(\psi)=(1+\epsilon)^{-c_0}$ if there is no
big job in $\psi$). Obviously, $$OPT(\psi)\ge
LB=\max\{LD(\psi)/m,P_{max}(\psi)\}\ge 1. $$ The above ratio is
regardless of the scaling factor and is called an {\em instant
approximation ratio}.

We can use a slightly different $(\omega+c_0+1)$-tuple
$\tau=(\nu_{-c_0},\nu_{-c_0+1},\cdots,\nu_{\omega})$ to approximate
a state, where each coordinate is an integer. It is called a {\em
trimmed-state}. Specifically, $\tau$ is called a simulating-state of
$st_h$ if $\nu_i=\eta_i^h$ for $-c_0<i\le \omega$ and
$\eta_{-c_0}^h\le \nu_{-c_0}\le \eta_{-c_0}^h+2$.

We define $LD(\tau)=\sum_{i=-c_0}^{\omega}\nu_i(1+\epsilon)^{i}$ and
restrict our attention on trimmed-states whose load is no more than
$4LB+2(1+\epsilon)^{-c_0}$. There are at most $\Lambda\le
2^{O(1/\epsilon^2\log^2 (1/\epsilon))}$ such kinds of trimmed-states
(called feasible trimmed-states). We sort these trimmed-states
arbitrarily as $\tau_1,\cdots,\tau_{\Lambda}$, and define a
$\Lambda$-tuple $\phi=(\xi_1,\xi_2,\cdots,\xi_{\Lambda})$ to
approximate scenarios, where $\sum_i \xi_i=m$ and $0\le \xi_i\le m$
is the number of machines whose corresponding trimmed-state is
$\tau_i$. Indeed, $\phi$ is called a trimmed-scenario and
specifically, it is called a simulating-scenario of
$\psi=(st_1,st_2,\cdots,st_m)$ if there is a one to one
correspondence between the $m$ states (i.e., $st_1$ to $st_m$) and
the $m$ trimmed-states of $\phi$ such that each trimmed-state is the
simulating-state of its corresponding state.

Recall that in $\psi$, jobs are scaled with $T_{LB}$, thus $1\le
\max\{1/mLD(\psi),P_{max}(\psi)\}<(1+\epsilon)^{\omega}$. We may
restrict our attentions to trimmed-scenarios satisfying $1\le
\max\{1/mLD(\phi),P_{max}(\phi)\}<
(1+\epsilon)^{\omega}+2(1+\epsilon)^{-c_0}$, where similarly we
define $LD(\phi)=\sum_j\xi_jLD(\tau_j)$, and $P_{max}(\phi)$ the
largest processing time of jobs in $\phi$. Trimmed-scenarios
satisfying the previous inequality are called feasible
trimmed-scenarios.

Notice that there are $\Gamma\le (m+1)^{\Lambda}$ different kinds of
feasible trimmed-scenarios. we sort them as
$\phi_1,\cdots,\phi_{\Gamma}$. As an exception, we plug in two
additional trimmed-scenarios $\phi_0$ and $\phi_{\Gamma+1}$, where
$\phi_0$ represents the initial trimmed-scenario in which there are
no jobs, and $\phi_{\Gamma+1}$ represents any infeasible
trimmed-scenario. Let $\Phi$ be the set of these trimmed-scenarios.
We define $$\rho(\phi)={C_{max}(\phi)}/{OPT(\phi)}$$ as the {\em
instant approximation ratio} of a feasible trimmed-scenario $\phi$,
in which $C_{max}(\phi)=\max_{j}\{LD(\tau_j):\xi_j>0\}$, and
$OPT(\phi)$ is the makespan of the optimum solution for the offline
scheduling problem in which jobs of $\phi$ are taken as an input and
every job (including small jobs) should be scheduled integrally. As
an exception, we define $\rho(\phi_{0})=1$ and
$\rho(\phi_{\Gamma+1})=\infty$.

Furthermore, notice that except for $\phi_{\Gamma+1}$,
$C_{max}(\phi)\le 4(1+\epsilon)^{\omega}+2(1+\epsilon)^{-c_0}\le
20$, which is a constant. Thus we can divide the interval $[1,20]$
equally into $19/\epsilon$ subintervals and let
$\Delta=\{1,1+\epsilon,\cdots,1+\epsilon\cdot 19/\epsilon\}$. We
round up the instant approximation ratio of each $\phi$ to its
nearest value in $\Delta$. For simplicity, we still denote the
rounded value as $\rho(\phi)$.

\begin{lemma}
\label{le:approx-simu-scenario} If $\phi$ is a simulating-scenario
of $\psi$, then $\rho(\psi)-O(\epsilon)\le \rho(\phi)\le
\rho(\psi)+O(\epsilon)$.
\end{lemma}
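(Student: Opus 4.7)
The plan is to compare the numerators and denominators of $\rho(\phi)$ and $\rho(\psi)$ separately, exploiting that the simulating relation leaves all big-job coordinates unchanged and inflates the small-job coordinate $\eta_{-c_0}^h$ by at most two units of $\delta := (1+\epsilon)^{-c_0} \le \epsilon$ per machine.

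For the numerator, I would use the one-to-one matching between $\{st_h\}$ and the trimmed-states composing $\phi$. For each matched pair $(st_h,\tau)$, the difference $LD(\tau) - LD(st_h) = (\nu_{-c_0} - \eta_{-c_0}^h)\delta$ lies in $[0, 2\delta]$, so taking the per-machine maximum immediately gives $0 \le C_{max}(\phi) - C_{max}(\psi) \le 2\delta$.

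For the denominator, I would establish $OPT(\psi) \le OPT(\phi) \le OPT(\psi) + 3\delta$. The lower bound is immediate: $\phi$ contains the same big jobs as $\psi$ and at least as much total small load (since $\nu_{-c_0}^h \ge \eta_{-c_0}^h$ for each $h$), so even the splittable relaxation of $\phi$ has optimum at least $OPT(\psi)$, and integrality only makes $OPT(\phi)$ larger. For the upper bound, I would start from an optimal splittable schedule of $\psi$ with makespan $T_0 = OPT(\psi)$, keep the big jobs in place, and raise the makespan to $T_1 = T_0 + 3\delta$. Letting $B$ denote the total big-job load (identical in $\psi$ and $\phi$), the total integer slack capacity for unit small jobs of size $\delta$ is at least $(mT_1 - B)/\delta - m$, and a short algebraic check using $mT_0 \ge B + \sum_h \eta_{-c_0}^h \delta$ shows this exceeds the job count $\sum_h \nu_{-c_0}^h \le \sum_h \eta_{-c_0}^h + 2m$. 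Since all unit small jobs share the identical tiny size $\delta$, a greedy assignment into the per-machine integer capacities produces a valid integral schedule at makespan $T_1$.

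Combining these bounds with $OPT(\cdot) \ge 1$ and the absolute bound $C_{max} \le 20$ on feasible trimmed-scenarios, I would deduce $|\rho(\phi) - \rho(\psi)| = O(\epsilon)$ before rounding; the final rounding of both ratios to the nearest element of $\Delta$ contributes at most one further $\epsilon$, which is absorbed into the $O(\epsilon)$ slack. The main obstacle is the upper bound $OPT(\phi) \le OPT(\psi) + O(\epsilon)$: converting a splittable schedule into an integral one costs almost nothing only because the unit small jobs of $\phi$ all share the same small size $\delta \le \epsilon$, and the extra mass of at most $2m\delta$ can be absorbed by a uniform $O(\epsilon)$ makespan inflation precisely because the per-machine discrepancy $\nu_{-c_0}^h - \eta_{-c_0}^h \le 2$ is bounded by an absolute constant independent of $m$.
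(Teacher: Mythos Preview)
Your proposal is correct and follows exactly the same route as the paper's own proof: the paper simply asserts $OPT(\psi)\le OPT(\phi)\le OPT(\psi)+3(1+\epsilon)^{-c_0}$ and $C_{max}(\psi)\le C_{max}(\phi)\le C_{max}(\psi)+2(1+\epsilon)^{-c_0}$, invokes $OPT(\psi)\ge 1$, and declares the lemma immediate. Your write-up supplies the details behind the $OPT$ upper bound (integralising the small jobs after a $3\delta$ makespan inflation) that the paper suppresses under ``It can be easily seen''; the capacity-counting argument you give is sound and matches the constant $3$.
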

\begin{proof}
It can be easily seen that $OPT(\psi)\le OPT(\phi)\le
OPT(\psi)+3(1+\epsilon)^{-c_0}$. Meanwhile $C_{max}(\psi)\le
C_{max}(\phi)\le C_{max}(\psi)+2(1+\epsilon)^{-c_0}$. Note that
$OPT(\psi)\ge 1$ and the lemma follows directly.
\end{proof}

Consider the scheduling of $n$ jobs by any online algorithm. The
whole procedure could be represented by a list as
$$(T_{LB}(1),\psi(1))\rightarrow (T_{LB}(2),\psi(2))\rightarrow\cdots \rightarrow(T_{LB}(n),\psi(n)),$$
where $\psi(k)$ is the scenario when there are $k$ jobs, and
$T_{LB}(k)$ is the corresponding scaling factor. Here $\psi(k)$
changes to $\psi(k+1)$ by adding a new job $p_{k+1}$, and the reader
may refer to Appendix~\ref{ap:add job scenario} to see how the
coordinates of a scenario change when a new job is added.

Let $\mu_0$ be the smallest integer such that
$(1+\epsilon)^{\mu_0}\ge 4(1+\epsilon)^{\omega+c_0+1}$ and $R
=\{0,(1+\epsilon)^{-c_0},\cdots,(1+\epsilon)^{\lceil
\mu_0/\omega\rceil+\omega-1}\}$. We prove that, if a scenario $\psi$
changes to $\psi'$ by adding some job $p_n$, then there exists some
job $p_n'\in R$ such that $\phi$ changes to $\phi'$ by adding
$p_n'$, and furthermore, $\phi$ and $\phi'$ are the
simulating-scenarios of $\psi$ and $\psi'$, respectively. This
suffices to approximate the above scenario sequence by the following
sequence
$$\phi_0\rightarrow\phi(1)\rightarrow \phi(2)\rightarrow\cdots \rightarrow\phi(n),$$
where $\phi(k)$ is the simulating-scenario of $\psi(k)$, and
$\phi_0$ is the initial scenario where there is no job.

We briefly argue why it is this case. Suppose $T_{LB}$ is the
scaling factor of $\psi$. According to the online algorithm, $p_n$
is put on machine $h$ where
$st_h=(\eta_{-c_0},\cdots,\eta_{\omega})$. Let
$\tau=(\nu_{-c_0},\cdots,\nu_{\omega})$ be its simulating state in
$\phi$. If $p_n/T_{LB}<(1+\epsilon)^{-c_0}$ and
$\eta_{-c_0}+p_n/T_{LB}\le \nu_{-c_0}$, then $\phi$ is still a
simulating scenario of $\psi'$ and we may set $p_n'=0$. Else if
$\nu_{-c_0}<\eta_{-c_0}+p_n/T_{LB}\le \nu_{-c_0}+1$, we may set
$p_n'=(1+\epsilon)^{-c_0}$. For the upper bound on the processing
time, suppose $p_n/T_{LB}$ is so large that the previous load of
each machine (which is no more than $4LB\le 4(1+\epsilon)^{\omega}$)
becomes no more than $(1+\epsilon)^{-c_0}p_n/T_{LB}$. It then makes
no difference by releasing an even larger job. A rigorous proof
involves a complete analysis of how the coordinates of a
trimmed-scenario change by adding a job belonging to $R$ (see
Appendix~\ref{ap:add job trimmed}), and a case by case analysis of
each possible changes between $\psi$ and $\psi'$ (see
Appendix~\ref{ap:simulate}).

\section{Constructing a Transformation Graph}
We construct a graph $G$ that contains all the possible sequences of
the form $\phi_0\rightarrow\phi(1)\rightarrow
\phi(2)\rightarrow\cdots \rightarrow\phi(n)$. This is called a
transformation graph. For ease of our following analysis, some of
the feasible trimmed-scenarios should be deleted. Recall that $1\le
\max\{1/mLD(\phi),P_{max}(\phi)\}<
(1+\epsilon)^{\omega}+2(1+\epsilon)^{-c_0}$ is satisfied for any
feasible trimmed-scenario $\phi$, and it may happen that two
trimmed-scenarios are essentially the same. Indeed, if
$(1+\epsilon)^{\omega}\le \max\{1/mLD(\phi),P_{max}(\phi)\}<
(1+\epsilon)^{\omega}+2(1+\epsilon)^{-c_0}$, then by dividing
$(1+\epsilon)^{\omega}$ from the processing times of each job in
$\phi$ we can derive another trimmed-scenario $\phi'$ satisfying
$1\le \max\{1/mLD(\phi'),P_{max}(\phi')\}<
1+2(1+\epsilon)^{-c_0-\omega}$, which is also feasible. If $\phi$ is
a simulating-scenario of $\psi$, then $\phi'$ is called a shifted
simulating-scenario of $\psi$. It is easy to verify that the instant
approximation ratio of a shifted simulating scenario is also similar
to that of the corresponding scenario (see
Appendix~\ref{ap:delete}). In this case $\phi$ is deleted and we
only keep $\phi'$. Let $\Phi'\subset\Phi$ be the set of remaining
trimmed-scenarios. We can prove that, for any real schedule
represented as $\psi(1)\rightarrow \psi(2)\rightarrow\cdots
\rightarrow\psi(n)$, we can find
$\phi_0\rightarrow\phi(1)\rightarrow \phi(2)\rightarrow\cdots
\rightarrow\phi(n)$ such that $\phi(k)\in \Phi'$ is either a
simulating-scenario or a shifted simulating-scenario of $\psi(k)$.
The reader can refer to Appendix~\ref{ap:delete} for a rigorous
proof.

Recall that when a trimmed-scenario changes to another, the
adversary only releases a job belonging to $R$. Let $\zeta=|R|$ and
$\alpha_1$, $\cdots$, $\alpha_{\zeta}$ be all the distinct
processing times in $R$. We show how $G$ is constructed.

We first construct two disjoint vertex sets $S_0$ and $A_0$. For
every $\phi_i\in \Phi'$, there is a vertex $s_i^0\in S_0$. For each
$s_i^0$, there are $\zeta$ vertices of $A_0$ incident to it, namely
$a_{ij}^0$ for $1\le j\le \zeta$. The node $a_{ij}^0$ represents the
release of a job of processing time $\alpha_j$ to the
trimmed-scenario $\phi_i$. Thus, $S_0\cup A_0$ along with the edges
forms a bipartite graph.

Let $S_1=\{s_i^1|s_i^0\in S_0\}$ be a copy of $S_0$. By scheduling a
job of $\alpha_j$, if $\phi_i$ could be changed to $\phi_k$, then
there is an edge between $a_{ij}^0$ and $s_k^1$. We go on to build
up the graph by creating an arbitrary number of copies of $S_0$ and
$A_0$, namely $S_1$, $S_2$, $\cdots$ and $A_1$, $A_2$, $\cdots$ such
that $S_h=\{s_i^h|s_i^0\in S_0\}$, $A_h=\{a_{ij}^h|a_{ij}^0\in
A_0\}$. Furthermore, there is an edge between $s_i^h$ and $a_{ij}^h$
if and only if there is an edge between $s_i^0$ and $a_{ij}^0$, and
an edge between $a_{ij}^h$ and $s_{k}^{h+1}$ if and only if there is
an edge between $a_{ij}^0$ and $s_k^1$.

The infinite graph we construct above is the transformation graph
$G$. We let $G_n$ be the subgraph of $G$ induced by the vertex set
$(\cup_{i=0}^{n}S_i)\cup(\cup_{i=0}^{n-1}A_i)$.

\section{Best Response Dynamics}
Recall that We can view online scheduling as a game between the
scheduler and the adversary. According to our previous analysis, we
can focus on trimmed-scenarios and assume that the adversary always
releases a job with processing time belonging to $R$. By scheduling
a job released by the adversary, the current trimmed-scenario
changes into another one.

We can consider the instant approximation ratio as the utility of
the adversary who tries to maximize it by leading the scheduling
into a (trimmed) scenario. After releasing $n$ jobs, if he is
satisfied with the current instant approximation ratio, then he
stops and the game is called an $n$-stage game. Otherwise he goes on
to release more jobs. The scheduler, however, tries to minimize the
competitive ratio by leading the game into trimmed-scenarios with
small instant approximation ratios.

Consider any $n$-stage game and define $\rho_n(s_k^n)=\rho(\phi_k)$.
It implies that if the game arrives at $\phi_k$ eventually, then the
utility of the adversary is $\rho(\phi_k)$. Notice that the
adversary could release a job of processing time $0$, thus $n$-stage
games include $k$-stage games for $k<n$. Consider $a_{ij}^{n-1}$. If
the current trimmed-scenario is $\phi_i$ and the adversary releases
a job with processing time $\alpha_j$, then all the possible
schedules by adding this job to different machines could be
represented by $N(a_{ij}^{n-1})=\{s_k^n:\textrm{$s_k^n$ is incident
to $a_{ij}^{n-1}$}\}$. The scheduler tries to minimize the
competitive ratio, and he knows that it is the last job, thus he
would choose the one with the least instant approximation ratio.
Thus we define
$$\rho_n(a_{ij}^{n-1})=\min_k\{\rho_n(s_k^n):s_k^n\in
N(a_{ij}^{n-1})\}.$$

Knowing this beforehand, the adversary chooses to release a job
which maximizes $\rho_n(a_{ij}^{n-1})$. Let
$N(s_i^{n-1})=\{a_{ij}^{n-1}:\textrm{$a_{ij}^{n-1}$ is incident to
$s_i^{n-1}$}\}$ and thus we define
$$\rho_n(s_i^{n-1})=\max_j\{\rho_n(a_{ij}^{n-1}):a_{ij}^{n-1}\in N(s_i^{n-1})\}.$$

Iteratively applying the above argument, we can define
$$\rho_n(a_{ij}^{h-1})=\min_k\{\rho_n(s_k^n):s_k^h\in N(a_{ij}^{h-1})\},$$
$$\rho_n(s_i^{h-1})=\max_j\{\rho_n(a_{ij}^{h-1}):a_{ij}^{h-1}\in N(s_i^{h-1})\}.$$

The value $\rho_{n}(s_i^{h})$ means that, if the current
trimmed-scenario is $\phi_i$, then the largest utility the adversary
could achieve by releasing $n-h$ jobs is $\rho_{n}(s_i^{h})$. Notice
that we start from the empty schedule $s_0^0$, thus
$\rho_n({s_0^0})$ is the largest utility the adversary could achieve
by releasing $n$ jobs.

\subsection{Bounding the number of stages}
The computation of the utility of the adversary relies on the number
of jobs released, however, theoretically the adversary could release
as many jobs as he wants. In this section, we prove the following
theorem.

\begin{theorem}
\label{th:stage bound} There exists some integer $n_0\le
O((m+1)^{\Lambda}/\epsilon)$, such that
$\rho_n(s_i^0)=\rho_{n_0}(s_i^0)$ for any $\phi_i\in \Phi'$ and
$n\ge n_0$.
\end{theorem}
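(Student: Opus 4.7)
The plan is to combine monotonicity of $\rho_n(s_i^0)$ in $n$ with the fact that every instant approximation ratio lies in the finite discrete set $\Delta$. First I would verify the monotonicity $\rho_{n+1}(s_i^0)\ge \rho_n(s_i^0)$ for every $\phi_i\in \Phi'$. Since $0\in R$, the adversary can always release a job of processing time $0$, and such a job is a self-loop in the transformation graph: it leaves the current trimmed-scenario unchanged regardless of the machine chosen by the scheduler. Hence at the node $a_{i,j_0}^{0}$ with $\alpha_{j_0}=0$ one has $\rho_{n+1}(a_{i,j_0}^{0})=\rho_{n+1}(s_i^{1})$, and the level-shift isomorphism between the top $n$ layers of $G_{n+1}$ and the whole of $G_n$ gives $\rho_{n+1}(s_i^{1})=\rho_n(s_i^0)$; taking a max over released jobs finishes.

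Next I would establish a one-step fixed-point property: if $\rho_{n+1}(s_i^0)=\rho_n(s_i^0)$ for every $\phi_i\in \Phi'$, then the equality extends to all larger indices. Using the same level-shift isomorphism, both $\rho_{n+2}(s_i^0)$ and $\rho_{n+1}(s_i^0)$ can be written via one and the same ``one-step min--max'' operator applied respectively to the vectors $(\rho_{n+1}(s_k^0))_k$ and $(\rho_n(s_k^0))_k$; when those input vectors coincide, so do the outputs, and an induction on the index completes the step. In particular, stabilization is absorbing: once the vector $V_n:=(\rho_n(s_i^0))_{\phi_i\in\Phi'}$ satisfies $V_n=V_{n+1}$, it remains constant for all larger~$n$.

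To pin down $n_0$, I would count strict increases. Each coordinate $\rho_n(s_i^0)$ takes values in $\Delta$, whose cardinality is $19/\epsilon=O(1/\epsilon)$. By monotonicity each coordinate can strictly increase at most $|\Delta|$ times. Summing over the $\Gamma\le (m+1)^{\Lambda}$ trimmed-scenarios in $\Phi'$, the total number of indices $n$ with $V_n\ne V_{n+1}$ is at most $\Gamma\cdot|\Delta|=O((m+1)^{\Lambda}/\epsilon)$. Because stabilization is absorbing, all change-steps must occur at consecutive initial indices, so $V_n$ is constant from some index $K\le \Gamma\cdot|\Delta|$ onward; setting $n_0=\Gamma\cdot|\Delta|$ yields the required bound.

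The main subtlety I expect is pinning down monotonicity and the fixed-point property inside the layered-graph notation -- verifying rigorously that a zero-length job really produces a self-loop between consecutive copies of the scenario node, and that the backward min--max recursion defining $\rho_n$ is genuinely time-homogeneous across the infinite tower of copies $S_h$ and $A_h$ making up $G$. Once these two ingredients are in place, the counting and the conclusion are essentially immediate.
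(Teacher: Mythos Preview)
Your proposal is correct and follows the same three-step plan as the paper: monotonicity via the zero-processing-time job, an absorbing fixed point via time-homogeneity of the min--max recursion, and a counting bound from the discreteness of the value set. The paper packages the count through a potential $Z(n)=\sum_{\phi_i\in\Phi'\setminus\{\phi_{\Gamma+1}\}}\rho_n(s_i^0)$, bounded above by $20|\Phi'|$ and increasing by at least $\epsilon$ at every change-step; this is equivalent to your per-coordinate count.

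One point you assert without justification is that each $\rho_n(s_i^0)$ lies in $\Delta$. This is not automatic: since $\rho(\phi_{\Gamma+1})=\infty$, one must verify that from every feasible $\phi_i$ and every released $\alpha_j$ the scheduler has at least one response avoiding $\phi_{\Gamma+1}$, so that the inner $\min$ in the recursion never evaluates to $\infty$. The paper spends a separate lemma on this, arguing that applying list scheduling to $\alpha_j$ always keeps the trimmed-scenario feasible. Your per-coordinate counting can in fact sidestep this lemma---min and max preserve the finite set $\Delta\cup\{\infty\}$, so each coordinate still ranges over at most $|\Delta|+1$ values and the bound survives---but the paper's sum-based potential genuinely needs finiteness of every summand, so the issue is worth flagging in your write-up.
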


To prove it, we start with the following simple lemmas.
\begin{lemma}
For any $1\le h\le n$, $\rho_{n}(s_i^h)\le \rho_n(s_i^{h-1})$.
\end{lemma}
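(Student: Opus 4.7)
The plan is to exploit the fact that $0\in R$, so the adversary always has the option of releasing a ``dummy'' job of processing time $0$, which amounts to passing his turn. Concretely, let $j_0$ be the index with $\alpha_{j_0}=0$. I would first observe that releasing a job of processing time $0$ on any machine cannot alter any coordinate of the corresponding trimmed-state: all coordinates $\nu_i$ for $i>-c_0$ are integer counts of big jobs and are untouched by a small/zero job, and the small-job coordinate $\nu_{-c_0}$ also remains unchanged because the small load $L_s^h$ is unchanged and its simulating value already approximates $\eta_{-c_0}^h$ within $2$ units. Hence the unique neighbour of $a_{i,j_0}^{h-1}$ in $S_h$ is $s_i^h$ itself, i.e., $N(a_{i,j_0}^{h-1})=\{s_i^h\}$.

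Given this, the recursive definition immediately yields
$$\rho_n(a_{i,j_0}^{h-1})=\min_k\{\rho_n(s_k^h):s_k^h\in N(a_{i,j_0}^{h-1})\}=\rho_n(s_i^h).$$
Since $a_{i,j_0}^{h-1}\in N(s_i^{h-1})$, the max in the definition of $\rho_n(s_i^{h-1})$ is taken over a set containing $\rho_n(a_{i,j_0}^{h-1})$, so
$$\rho_n(s_i^{h-1})=\max_{j}\{\rho_n(a_{ij}^{h-1}):a_{ij}^{h-1}\in N(s_i^{h-1})\}\ge \rho_n(a_{i,j_0}^{h-1})=\rho_n(s_i^h),$$
which is exactly the desired inequality.

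The only delicate point, and the main thing to check carefully, is that a processing-time-$0$ release really does correspond to a self-loop at every trimmed-scenario, i.e., that the transformation graph constructed in the previous section indeed contains the edge from $s_i^{h-1}$ through $a_{i,j_0}^{h-1}$ back to $s_i^h$ for every feasible $\phi_i\in\Phi'$. This follows from the construction: edges in $G$ reflect exactly the possible transitions between trimmed-scenarios when the adversary releases a job from $R$, and since placing a zero-length job leaves each state (and hence the whole trimmed-scenario) intact, the transition $\phi_i\to\phi_i$ is available. No further case analysis or reference to the shifted simulating-scenarios is needed, because shifting is triggered only by load thresholds that cannot be crossed by a zero job.
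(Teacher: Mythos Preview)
Your proof is correct and follows exactly the same idea as the paper's one-line argument: the adversary can always release a job of processing time $0\in R$, which leaves the trimmed-scenario unchanged and therefore makes $\rho_n(s_i^{h-1})\ge\rho_n(a_{i,j_0}^{h-1})=\rho_n(s_i^h)$. You have simply spelled out in detail why the zero job induces a self-loop, a point the paper leaves implicit.
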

\begin{proof}
The proof is obvious by noticing that the adversary could release a
job with processing time $0$.
\end{proof}

\begin{lemma}
For any $0\le h\le n$ and $i\neq \Gamma+1$, $\rho_{n}(s_i^h)\in
\Delta$.
\end{lemma}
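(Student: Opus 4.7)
The plan is to prove this by backwards induction on $h$, descending from $h = n$ to $h = 0$. The key structural fact is that $\Delta$ is a finite totally ordered set, and the best-response recursion is built entirely from $\min$ and $\max$, both of which preserve membership in $\Delta$ as long as we avoid the sentinel value $\infty$ attached to $\phi_{\Gamma+1}$.

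For the base case $h = n$, the definition gives $\rho_n(s_i^n) = \rho(\phi_i)$. When $\phi_i \neq \phi_{\Gamma+1}$ is feasible, the rounding convention introduced in Section~2 puts $\rho(\phi_i) \in \Delta$ directly. This is well-defined because $C_{max}(\phi_i) \le 20$ and $OPT(\phi_i) \ge 1$, so the unrounded ratio lies in $[1,20]$, which is exactly the range covered by $\Delta$.

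For the inductive step, assume $\rho_n(s_k^h) \in \Delta$ for every $k$ with $\phi_k \neq \phi_{\Gamma+1}$, while $\rho_n(s_{\Gamma+1}^h) = \infty$. Consider $\rho_n(a_{ij}^{h-1}) = \min_k\{\rho_n(s_k^h): s_k^h \in N(a_{ij}^{h-1})\}$; the values being minimized form a subset of $\Delta \cup \{\infty\}$. I would argue that at least one neighbor of $a_{ij}^{h-1}$ corresponds to a feasible $\phi_k$, so the minimum is finite and hence lies in $\Delta$. This should follow from the transformation graph construction of Section~3, where the shifted simulating-scenario mechanism was introduced precisely to ensure that from any feasible $\phi_i \in \Phi'$ and any $\alpha_j \in R$, the scheduler has at least one machine placement whose resulting trimmed-scenario (possibly after shifting) lies in $\Phi'$. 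The outer step $\rho_n(s_i^{h-1}) = \max_j \rho_n(a_{ij}^{h-1})$ is then a maximum over finitely many values in $\Delta$, so it too lies in $\Delta$.

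The main obstacle is justifying the claim just invoked: that from every feasible $\phi_i \in \Phi'$ and every $\alpha_j \in R$, at least one of the $m$ possible placements yields a trimmed-scenario in $\Phi'$. If this ever failed, the adversary could force $\rho_n(a_{ij}^{h-1}) = \infty$, which would then propagate through the outer max and destroy the bound. Verifying it rigorously requires a case analysis of how the coordinates of a trimmed-scenario evolve when a job of each possible processing time in $R$ is added, paralleling the case analyses already sketched in the appendices on adding jobs to scenarios and on the shifting mechanism. Once that claim is in hand, the induction above is essentially a one-line observation about $\min$ and $\max$ on a finite chain.
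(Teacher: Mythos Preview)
Your inductive framework is correct and matches the paper's proof exactly: backward induction on $h$, with the closure of $\Delta$ under $\min$ and $\max$ doing the work once the $\infty$ value at $\phi_{\Gamma+1}$ is excluded. The base case and the outer $\max$-step are fine.

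The gap is precisely where you flag it: the claim that for every feasible $\phi_i\in\Phi'$ and every $\alpha_j\in R$, at least one placement avoids $\phi_{\Gamma+1}$. Your proposed justification, however, points at the wrong mechanism. The shifted simulating-scenario construction in Section~3 is not about avoiding infeasibility; it handles the case where adding a job pushes $\max\{LD(\phi)/m,P_{max}(\phi)\}$ past $(1+\epsilon)^\omega$ and triggers a rescaling (this is Case~1.3 in Appendix~\ref{ap:add job trimmed}). Infeasibility, i.e.\ landing in $\phi_{\Gamma+1}$, is Case~1.2: the load of the chosen machine exceeds $4(1+\epsilon)^\omega+2(1+\epsilon)^{-c_0}$ \emph{while} the global lower bound has not grown enough to force a rescale. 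Shifting does nothing for that case, and the appendix case analyses you cite concern simulating real-schedule transitions, not guaranteeing a feasible placement exists.

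The paper closes this gap with a short, direct argument rather than a case analysis: place $\alpha_j$ by list scheduling (on a least-loaded machine) and derive a contradiction if the result is infeasible. If the resulting trimmed-state has load at least $4(1+\epsilon)^\omega+2(1+\epsilon)^{-c_0}$ while $LB'<(1+\epsilon)^\omega+2(1+\epsilon)^{-c_0}$, then necessarily $\alpha_j\le(1+\epsilon)^\omega$, so the chosen machine already had load exceeding $3(1+\epsilon)^\omega$ before the job. But list scheduling chose a minimum-load machine, so \emph{every} machine has load exceeding $3(1+\epsilon)^\omega$, forcing $LD(\phi_i)/m>3(1+\epsilon)^\omega$ and contradicting the feasibility of $\phi_i$. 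This is a couple of lines and avoids the heavier case analysis you anticipate.
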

\begin{proof}
The lemma clearly holds for $h=n$. Suppose the lemma holds for some
$h\ge 1$, we prove that the lemma is also true for $h-1$.

Recall that $\rho_n(a_{ij}^{h-1})=\min_k\{\rho_n(s_k^n):s_k^h\in
N(a_{ij}^{h-1})\}$. We prove that $\rho_n(a_{ij}^{h-1})\in \Delta$.
To this end, we only need to show that, we can always put $\alpha_j$
to a certain machine so that $\phi_i$ is not transformed into
$\phi_{\Gamma+1}$.

We apply list scheduling when $\alpha_j$ is released. Suppose by
scheduling $\alpha_j$ in this way, $\phi_i$ is transformed into
$\phi_{\Gamma+1}$, then $\alpha_j=(1+\epsilon)^{\mu}$ for $1\le
\mu\le \omega$ and
$LB'=\max\{1/mLD(\phi)+\alpha_j/m,P_{max}(\phi),\alpha_j\}<(1+\epsilon)^{\omega}+2(1+\epsilon)^{-c_0}$.
Furthermore, suppose $\alpha_j$ is put to a machine whose
trimmed-state is $\tau$. Then $LD(\tau)+\alpha_j\ge
4(1+\epsilon)^{\omega}+2(1+\epsilon)^{-c_0}$. Now it follows
directly that $LD(\tau)> 3(1+\epsilon)^{\omega}$. Notice that we put
$\alpha_j$ to the machine with the least load. Before $\alpha_j$ is
released, the load of every machine in $\phi_i$ is larger than
$3(1+\epsilon)^{\omega}$, which contradicts the fact that $\phi_i$
is a feasible trimmed-scenario.

Therefore, applying list scheduling, $\phi_i$ can always transform
to another feasible trimmed-scenario, which ensures that
$\rho_n(a_{ij}^{h-1})\in \Delta$. Thus
$\rho_n(s_i^{h-1})=\max_j\{\rho_n(a_{ij}^{h-1}):a_{ij}^{h-1}\in
N(s_i^{h-1})\}\in \Delta$.

\end{proof}

\begin{lemma}
If there exists a number $n\in N$ such that $\rho_{n+1}(s_i^0)=
\rho_{n}(s_i^{0})$, then for any integer $h\ge 0$,
$\rho_{n+h}(s_i^0)=\rho_{n}(s_i^{0})$.
\end{lemma}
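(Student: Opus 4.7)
The plan is induction on $h$. The base cases $h=0$ (trivial) and $h=1$ (the hypothesis) are immediate.

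For the inductive step I would first establish a shift identity, $\rho_n(s_i^k)=\rho_{n-k}(s_i^0)$ for all $0\le k\le n$, by induction on $n-k$: both sides satisfy the same max--min recursion and share the common base case $\rho(\phi_i)$ at $k=n$. Writing $V_n(i):=\rho_n(s_i^0)$, the recurrence then becomes
\[
V_{n+1}(i)=\max_j\min_k V_n(k),
\]
where $j$ ranges over processing times in $R$ and $k$ ranges over trimmed-scenarios reachable from $\phi_i$ by releasing $\alpha_j$ and placing it on some machine. Thus $V_n$ is the $n$-fold orbit of the initial data $\{\rho(\phi_k)\}_k$ under a monotone max--min operator $T$.

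Second, since the adversary may release a job of processing time $0$ as his first move, leaving $\phi_i$ unchanged, $V_n(i)$ is nondecreasing in $n$; hence $V_{n+h}(i)\ge V_n(i)$ for every $h\ge 0$, which supplies one of the two inequalities. For the reverse inequality, I would exploit the hypothesis $V_{n+1}(i)=V_n(i)=:c$ as follows. The equality forces, for every adversary action $j$, the existence of a scheduler response $k_j$ with $V_n(k_j)\le c$. Plugging these witnesses into the recurrence gives
\[
V_{n+h+1}(i)=\max_j\min_k V_{n+h}(k)\le \max_j V_{n+h}(k_j),
\]
so it suffices to show $V_{n+h}(k_j)\le c$ for every $j$.

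The main obstacle is exactly this final bound. The lemma is asserted at the single scenario $\phi_i$, but closing the induction requires stability at each successor $\phi_{k_j}$. I would resolve this by strengthening the induction hypothesis to hold simultaneously at all scenarios appearing on the adversary's optimal subgame tree rooted at $\phi_i$, and by arguing that a strict inequality $V_{n+h}(k_j)>c$ at any successor would, by monotonicity of $T$ applied along the branches of this tree, lift to a strict inequality $V_{n+1}(i)>V_n(i)$, contradicting the hypothesis. Since the tree has finite depth and values live in the finite set $\Delta$, this monotone propagation terminates and yields the contradiction, completing the induction.
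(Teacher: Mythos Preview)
Your shift identity $\rho_n(s_i^k)=\rho_{n-k}(s_i^0)$ and the recasting $V_{n+1}=T(V_n)$ for a monotone max--min operator $T$ are exactly right, and this is essentially how the paper argues as well. The difficulty you run into in your last paragraph stems from a misreading of the hypothesis: the lemma is meant with $\rho_{n+1}(s_i^0)=\rho_n(s_i^0)$ holding for \emph{every} index $i$ simultaneously, not for a single fixed $i$. This is how the paper's own proof uses it (the step $\min_k\rho_{n+h}(s_k^1)=\min_k\rho_n(s_k^0)$ invokes the induction hypothesis at every $k$, not just at $i$), and it is how the lemma is applied in the proof of Theorem~\ref{th:stage bound}, where the hypothesis is obtained from $Z(n+1)=Z(n)$ together with monotonicity, forcing equality componentwise.

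Under that global reading your operator formulation finishes the proof in one line: $V_{n+1}=V_n$ gives $V_{n+2}=T(V_{n+1})=T(V_n)=V_{n+1}$, and induction does the rest. No tree argument is needed; this is the same mechanism the paper is exploiting, just packaged more cleanly.

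Your final paragraph, by contrast, attempts to salvage the single-$i$ version, and that version is simply false. A toy instance: scenarios $\phi_0,\phi_1,\phi_2$ with $\rho(\phi_0)=\rho(\phi_1)=1$ and $\rho(\phi_2)=2$; from $\phi_0$ the adversary may stay or move to $\phi_1$, from $\phi_1$ may stay or move to $\phi_2$, with no scheduler choice. Then $V_0(0)=V_1(0)=1$ but $V_2(0)=2$. Your proposed contradiction, that $V_{n+h}(k_j)>c$ would ``lift'' to $V_{n+1}(i)>V_n(i)$, does not hold: what monotonicity of $T$ actually gives you is $V_{n+h+1}(i)>c$, which says nothing about $V_{n+1}(i)$. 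So the strengthening you sketch cannot be made to work, because the statement it would establish is not true.
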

\begin{proof}
We prove the lemma by induction. Suppose it holds for $h$. We
consider $h+1$.

Obviously
$\rho_{n+h}(s_i^{n+h})=\rho_{n+h+1}(s_i^{n+h+1})=\rho(\phi_i)$.
According to the computing rule,
$$\rho_{n+h+1}(a_{ij}^{n+h})=\min_k\{\rho_{n+h+1}(s_k^{n+h+1}):s_k^{n+h+1}\in N(a_{ij}^{n+h})\},$$
$$\rho_{n+h}(a_{ij}^{n+h-1})=\min_k\{\rho_{n+h}(s_k^{n+h}):s_k^{n+h}\in N(a_{ij}^{n+h-1})\}.$$
Recall that $s_k^{n+h+1}\in N(a_{ij}^{n+h})$ if and only if
$s_k^1\in N(a_{ij}^0)$, and thus it is also equivalent to
$s_k^{n+h}\in N(a_{ij}^{n+h-1})$. Hence,
$\rho_{n+h+1}(a_{ij}^{n+h})=\rho_{n+h}(a_{ij}^{n+h-1})$.

Using analogous arguments, we can show that
$\rho_{n+h+1}(s_{i}^{n+h})=\rho_{n+h}(s_{i}^{n+h-1})$. Iteratively
applying the above procedure, we can finally show that
$\rho_{n+h+1}(s_{i}^{1})=\rho_{n+h}(s_{i}^{0})$. Similarly,
$\rho_{n+h}(s_{i}^{1})=\rho_{n+h-1}(s_{i}^{0})$.

According to the induction hypothesis, we know
$\rho_{n+h}(s_{i}^{1})=\rho_{n+h-1}(s_{i}^{0})=\rho_{n}(s_i^{0})$,
and
$\rho_{n+h+1}(s_{i}^{1})=\rho_{n+h}(s_{i}^{0})=\rho_{n}(s_i^{0})$.
Meanwhile
$$\rho_{n+h}(a_{ij}^{0})=\min_k\{\rho_{n+h}(s_k^{1}):s_k^{1}\in N(a_{ij}^{0})\}=\min_k\{\rho_{n}(s_k^{0}):s_k^{1}\in N(a_{ij}^{0})\},$$
$$\rho_{n+h+1}(a_{ij}^{0})=\min_k\{\rho_{n+h+1}(s_k^{1}):s_k^{1}\in N(a_{ij}^{0})\}=\min_k\{\rho_{n}(s_k^{0}):s_k^{1}\in N(a_{ij}^{0})\}.$$
Thus it immediately follows that
$\rho_{n+h}(a_{ij}^{0})=\rho_{n+h+1}(a_{ij}^{0})$. Furthermore,
\begin{eqnarray*}
\rho_{n+h+1}(s_i^{0})&=&\max_j\{\rho_{n+h+1}(a_{ij}^{0}):a_{ij}^{0}\in
N(s_i^{0})\}\\
&=&\max_j\{\rho_{n+h}(a_{ij}^{0}):a_{ij}^{0}\in
N(s_i^{0})\}=\rho_{n+h}(s_i^{0}). \end{eqnarray*} The lemma holds
for $h+1$.
\end{proof}

Now we arrive at the proof of Theorem \ref{th:stage bound}. Define
$Z(n)=\sum_{\phi_i\in\Phi'\setminus\{\phi_{\Gamma+1}\}}\rho_n(s_i^0)$
as the potential function. According to the previous lemmas,
$Z(n+1)\ge Z(n)$, and if $Z(n_0+1)=Z(n_0)$, then $Z(n)=Z(n_0)$ for
any $n\ge n_0$. Furthermore, if $Z(n+1)>Z(n)$, then $Z(n+1)-Z(n)\ge
\epsilon$. Suppose $Z(n+1)>Z(n)$, then it follows directly that
$Z(n+1)>Z(n)>\cdots>Z(1)$. Recall that $Z(1)\ge 0$ and $Z(n+1)\le
20(|\Phi'|-1)\le O((m+1)^{\Lambda})$, thus $n+1\le
O((m+1)^{\Lambda}/\epsilon)$. Furthermore, it can be easily verified
that if $Z(n+1)=Z(n)$, then $\rho_{n+1}(s_i^0)=\rho_n(s_i^0)$ for
any $\phi\in\Phi'$. Thus, by setting
$n_0=O((m+1)^{\Lambda}/\epsilon)$, Theorem \ref{th:stage bound}
follows.

Let $n_0$ be the smallest integer satisfying Theorem \ref{th:stage
bound}. Let $\rho^*=\rho_{n_0}(s_0^0)$, and
$\rho(s_i^0)=\rho_{n_0}(s_i^0)$. Now it is not difficult to see
that, the optimal online algorithm for $P||C_{max}$ has a
competitive ratio around $\rho^*$. A rigorous proof of such an
observation depends on the following two facts.
\begin{itemize}
\item[1. ]{Given any online algorithm, there exists a list
of at most $n_0$ jobs such that by scheduling them, its competitive
ratio exceeds $\rho^*-O(\epsilon)$.}
\item[2. ]{There exists an online algorithm whose competitive ratio is at most $\rho^*+O(\epsilon)$.}
\end{itemize}
The first fact could be proved via $G_{n_0}$, where
$\rho^*=\rho_{n_0}(s_0^0)$ ensures that $n_0$ jobs are enough to
achieve the lower bound. The readers may refer to Appendix
\ref{ap:strategy adv} for details. The second observation could be
proved via $G_{n_0+1}$, where
$\rho_{n_0+1}(s_i^0)=\rho_{n_0+1}(s_i^1)=\rho(s_i^0)$ for every
$\phi_i$. Each time a job is released, the scheduler may assume that
he is at the vertex $s_i^0$ where $\rho_{n_0+1}(s_i^0)\le \rho^*$,
and find a feasible schedule by leading the game into $s_k^1$ where
$\rho_{n_0+1}(s_i^0)=\rho_{n_0+1}(s_k^1)\le \rho^*$. After
scheduling the job he may still assume that he is at $s_k^0$. The
readers may refer to Appendix~\ref{ap:strategy sche} for details.

Using the framework we derive, competitive schemes could be
constructed for a variety of online scheduling problems, including
$Rm||C_{max}$ and $Rm||\sum_i C_i^p$ for constant $p$. Additionally,
if we restrict that the processing time of each job is bounded by
$q$, then an optimal online algorithm for $P|p_j\le q|C_{max}$ could
be derived (in $(mq)^{O(mq)}$ time). The readers may refer to
Appendix~\ref{ap:extension} for details.

\section{Concluding Remarks}
We provide a new framework for the online over list scheduling
problems. We remark that, through such a framework, nearly optimal
algorithms could also be derived for other online problems,
including the k-server problem (despite that the running time is
rather huge, which is exponential).

As nearly optimal algorithms could be derived for various online
problems, it becomes a very interesting and challenging problem to
consider the hardness of deriving optimal online algorithms. Is
there some complexity domain such that finding an optimal online
algorithm is {\em hard} in some sense? For example, given a constant
$\rho$, consider the problem of determining whether there exists an
online algorithm for $P||C_{max}$ whose competitive ratio is at most
$\rho$. Could it be answered in time $f(m,\rho)$ for any given
function $f$? We expect the first exciting results along this line,
that would open the online area at a new stage.

\bibliographystyle{plain}

\clearpage

\begin{appendix}

\section{Adding a new job to a scenario}\label{ap:add job scenario} Before we show how a scenario is changed
by adding a new job, we first show how a scenario is changed when we
scale its jobs using a new factor $T\in SC$ and $T>T_{LB}$.

\subsection{Re-computation of a scenario}
Let $(T_{LB},\psi)$ be a real schedule at any time where
$\psi=(st_1,st_2,\cdots,st_m)$. If we choose $T>T_{LB}$ to scale
jobs, then a big job previously may become a small job (i.e., no
greater than $T(1+\epsilon)^{-c_0}$). Suppose
$T=T_{LB}(1+\epsilon)^{k\omega}$, then a job with processing time
$T_{LB}(1+\epsilon)^{j}$ is denoted as $T(1+\epsilon)^{j-k\omega}$
now, hence a state $st=(\eta_{-c_0},\cdots,\eta_{\omega})$ of $\psi$
becomes $\hat{st}=(\hat{\eta}_{-c_0},\cdots,\hat{\eta}_{\omega})$
where $\hat{\eta}_i=\eta_{i+k\omega}$ for $i>-c_0$ (we let
$\eta_{i}=0$ for $i>\omega$), and
$$\hat{\eta}_{c_0}=\frac{\sum_{i=-c_0}^{k\omega-c_0}T_{LB}(1+\epsilon)^i\eta_i}{T(1+\epsilon)^{-c_0}}
=\frac{\sum_{i=-c_0}^{k\omega-c_0}(1+\epsilon)^i\eta_i}{(1+\epsilon)^{k\omega-c_0}}.$$

The above computation could be viewed as shifting the state
leftwards by $k\omega$ 'bits', and we define a function $f_k$ to
represent it such that $f_k(st)=\hat{st}$. Similarly the scenario
$\psi$ changes to $\hat{\psi}=(f_k(st_1,\cdots,f_k(st_m))$ and we
denote $f_k(\psi)=\hat{\psi}$.

\subsection{Adding a new job}
Again, let $(T_{LB},\psi)$ be a real schedule at any time where
$\psi=(st_1,st_2,\cdots,st_m)$. Suppose a new job $p_n$ is released
and scheduled on machine $h$ where
$st_h=(\eta_{-c_0},\eta_{-c_0+1},\cdots,\eta_{\omega})$, and
furthermore, $\psi$ changes to $\psi'$. We determine the coordinates
of $\psi'$ in the following.

Consider $p_n$. If $p_n\le T_{LB}(1+\epsilon)^{\omega}$ then we
define the addition $st_h+p_n/T_{LB}=\bar{st}_h$ in the following
way where
$\bar{st_h}=(\bar{\eta}_{-c_0},\cdots,\bar{\eta}_{\omega})$.
\begin{itemize}
\item{If $p_n/T=(1+\epsilon)^{\mu}$ for $-c_0+1\le \mu\le \omega$, then $\bar{\eta}_{\mu}=\eta_{\mu}+1$ and $\bar{\eta_j}=\eta_j$ for $j\neq \mu$.}
\item{If $p_n/T\le (1+\epsilon)^{-c_0}$, then $\bar{\eta}_{-c_0}=\eta_{-c_0}+p_n/(T_{LB}(1+\epsilon)^{-c_0})$ and $\bar{\eta}_j=\eta_j$ for $j\neq -c_0$.}
\end{itemize}

Let
$\bar{\psi}=(st_1,\cdots,st_{h-1},\bar{st}_h,st_{h+1},\cdots,st_m)$
be a temporal result. If $\bar{\psi}$ is feasible, which implies
that $\max\{LD(\bar{\psi})/m,P_{max}(\bar{\psi})\}\in
[1,(1+\epsilon)^{\omega})$, then $\psi'=\bar{\psi}$. Otherwise
$\bar{\psi}$ is infeasible and there are two possibilities.

\noindent\textbf{Case 1. }$\max\{1/mLD(\bar{\psi}),
P_{max}(\bar{\psi})\}\ge(1+\epsilon)^{\omega}$. It is not difficult to verify that\\
$\max\{1/mLD(\bar{\psi}),P_{max}(\bar{\psi})\}<(1+\epsilon)^{2\omega}$,
thus $f_1(\bar{\psi})$ is feasible and we write
$\psi'=f_1(\bar{\psi})$.

\noindent\textbf{Case 2. }$1\le \max\{1/mLD(\bar{\psi}),
P_{max}(\bar{\psi})\}<(1+\epsilon)^{\omega}$ while
$LD(\bar{st}_h)>4(1+\epsilon)^{\omega}$, i.e., $\bar{st}_h$ is an
infeasible state. In this case the competitive ratio of the online
algorithm becomes larger than $2$. Thus job $p_n$ is never added to
$st_h$ if it is scheduled according to an online algorithm with
competitive ratio no greater than $2$.

Otherwise, $(1+\epsilon)^{k\omega}\le p_n/T_{LB}<
(1+\epsilon)^{(k+1)\omega}$ for some $k\ge1$. It is easy to verify
that, by adding $p_n$ to the schedule, the scaling factor becomes
$T_{LB}(1+\epsilon)^{k\omega}$. Thus $\psi'=(st_1',\cdots,st_m')$
where $st_j'=f_k(st_j)$ for $j\neq h$, and
$st_h'=f_k(st_h)+p_n/(T_{LB}(1+\epsilon)^{\omega})$.

\section{Adding a new job to a trimmed-scenario}
\label{ap:add job trimmed} Notice that a trimmed-scenario could also
be viewed as a scenario, thus adding a new job to it could be viewed
as adding a new job to a scenario, and then rounding up the
coordinates of the resulted scenario to integers. Specifically, we
restrict the processing time of the job added is either $0$ or
$(1+\epsilon)^{\mu}$ for $\mu\ge -c_0$. We will show later that it
is possible to put an upper bound on the processing times.

\subsection{Re-computation of a trimmed-scenario}
To re-compute a trimmed-scenario $\phi$, we take $\phi$ as a
scenario with scaling factor $T_{LB}=1$. Suppose we want to use a
new factor $(1+\epsilon)^{\omega}$ to scale jobs, then each
trimmed-state of $\phi$, say $\tau$, is re-computed as $f_1(\tau)$.
Notice that its first coordinate may be fractional, we round it up
and let $g_1(\tau)=\lceil f_1(\tau)\rceil$ where $\lceil
\vec{v}\rceil$ for a vector means we round each coordinate $v_i$ of
$\vec{v}$ to $\lceil v_i\rceil$.

We define $g_k$ iteratively as $g_k(\tau)=g_{k-1}(g_1(\tau))$.

Notice that if $\tau$ is feasible (i.e., $LD(\tau)\le
4(1+\epsilon)^{\omega}+2(1+\epsilon)^{-c_0}$), then $g_k(\tau)$ is
feasible for any $k\ge 1$. Thus, we define
$g_k(\phi)=\phi'=(\xi_1',\xi_2',\cdots,\xi_{\Lambda}')$ where
$\xi_j'=\sum_{h:g_k(\tau_h)=\tau_j}\xi_h$. Specifically, if
$\{h:g_k(\tau_h)=\tau_j\}=\emptyset$, then $\xi_j'=0$.

We have the following lemma.
\begin{lemma}
\label{le:state trans} For any integer $k\ge 0$, feasible state
$st_h$ and feasible trimmed-state $\tau$, the following holds:
$$(1+\epsilon)^{k\omega}LD(f_k(st_h))=LD(st_h),$$
$$LD(\tau)\le (1+\epsilon)^{k\omega}LD(g_k(\tau))\le LD(\tau)+\sum_{i=1}^k(1+\epsilon)^{-c_0+i\omega}\le LD(\tau)+2(1+\epsilon)^{-c_0+k\omega}.$$
\end{lemma}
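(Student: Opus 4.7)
The plan is to prove the two statements separately, the first by directly unpacking the definition and the second by induction on $k$.

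For the identity $(1+\epsilon)^{k\omega}LD(f_k(st_h))=LD(st_h)$, I would rely on the fact that $f_k$ is nothing more than a re-expression of the same underlying jobs under the new scaling factor $(1+\epsilon)^{k\omega}T_{LB}$ in place of $T_{LB}$. Every job of actual processing time $p$ contributes $p/T_{LB}$ to $LD(st_h)$ and $p/((1+\epsilon)^{k\omega}T_{LB})$ to $LD(f_k(st_h))$, where the mass of jobs that have become small under the new factor is collected in the first coordinate $\hat\eta_{-c_0}$ via the formula of Appendix~\ref{ap:add job scenario}. Summing over all jobs yields exactly the factor $(1+\epsilon)^{k\omega}$ between the two loads; the case $k=0$ is trivial.

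For the bounds on $LD(g_k(\tau))$ I would induct on $k$. The key observation is that only the first coordinate of $f_1(\tau)$ can be fractional: the remaining coordinates are shifted copies of integer coordinates of $\tau$ (with $\nu_j=0$ for $j>\omega$), so they survive the $\lceil\cdot\rceil$ unchanged. Hence $g_1(\tau)$ agrees with $f_1(\tau)$ everywhere except in the first coordinate, which is rounded up by strictly less than one, increasing $LD$ by strictly less than $(1+\epsilon)^{-c_0}$. Combined with part~1 this yields
$$LD(\tau)\le (1+\epsilon)^{\omega}LD(g_1(\tau))\le LD(\tau)+(1+\epsilon)^{-c_0+\omega},$$
which is the $k=1$ case. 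For the induction step I would write $g_k(\tau)=g_{k-1}(g_1(\tau))$, apply the hypothesis to the feasible trimmed-state $g_1(\tau)$ (feasibility is preserved by $g_1$ as noted in Appendix~\ref{ap:add job trimmed}), and combine with the $k=1$ bound. The resulting telescoping sum gives $(1+\epsilon)^{k\omega}LD(g_k(\tau))\le LD(\tau)+\sum_{i=1}^k(1+\epsilon)^{-c_0+i\omega}$, and the lower bound follows by iterating $LD(g_1(\cdot))\ge LD(f_1(\cdot))$.

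The final inequality $\sum_{i=1}^k(1+\epsilon)^{-c_0+i\omega}\le 2(1+\epsilon)^{-c_0+k\omega}$ reduces to a geometric-series estimate: factoring out the largest term gives $\sum_{i=1}^k(1+\epsilon)^{i\omega}=(1+\epsilon)^{k\omega}\sum_{j=0}^{k-1}(1+\epsilon)^{-j\omega}$, and since $\omega$ is chosen so that $(1+\epsilon)^{\omega}\ge 3$, the tail is bounded by $1/(1-(1+\epsilon)^{-\omega})\le 3/2<2$. The only real care needed anywhere in the argument is the bookkeeping in the induction step---tracking that the rounding error incurred at iteration $i$ is measured in its own scaled units and grows to $(1+\epsilon)^{-c_0+i\omega}$ in the original units---but this drops out cleanly from a single application of part~1 at each level of the recursion.
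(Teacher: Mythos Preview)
Your proposal is correct and follows exactly the approach the paper indicates: the paper's entire proof reads ``The proof is simple through induction,'' and your argument is precisely that induction carried out in full, including the observation that only the first coordinate of $f_1(\tau)$ is rounded, the telescoping via $g_k=g_{k-1}\circ g_1$, and the geometric-series bound using $(1+\epsilon)^\omega\ge 3$. If anything, you have supplied the details the paper omits.
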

The proof is simple through induction.

\subsection{Adding a new job}
Suppose the feasible trimmed-scenario $\phi$ becomes $\phi'$ by
adding a new job $p_n=(1+\epsilon)^{\mu}$, and furthermore, the job
is added to a machine whose trimmed-state is $\tau_j$. We show how
the coordinates of $\phi'$ is determined.

There are two possibilities.

\noindent\textbf{Case 1. }If $-c_0\le \mu\le \omega$, then by adding
a new job $p_n=(1+\epsilon)^{\mu}$ to a feasible trimmed-state
$\tau_j$, we simply take $\tau_j$ as a state and compute
$\bar{\tau}_j=\tau_j+p_n$ according to the rule of adding a job to
states.

Consider the $m$ trimmed-states of $\phi$, we replace $\tau_j$ with
$\bar{\tau}_j$ while keeping others intact. By doing so a temporal
trimmed-scenario $\bar{\phi}$ is generated and we compute
$LB(\bar{\phi})=\max\{1/mLD(\phi)+p_n/m,P_{max}(\phi),p_n\}$. There
are three possibilities.

\noindent\textbf{Case 1.1 }$LB(\bar{\phi})<
(1+\epsilon)^{\omega}+2(1+\epsilon)^{-c_0}$ and
$LD(\bar{\tau}_j)<4(1+\epsilon)^{\omega}+2(1+\epsilon)^{-c_0}$. Then
$\bar{\tau}_j$ is a feasible trimmed-state and suppose
$\bar{\tau}_j=\tau_{j'}$. Then $\phi'=\bar{\phi}$, i.e.,
$\phi'=(\xi_1',\xi_2',\cdots,\xi_{\Lambda}')$ where
$\xi_j'=\xi_j-1$, $\xi_{j'}'=\xi_{j'}+1$ and $\xi_l'=\xi_l$ for
$l\neq j,j'$.

\noindent\textbf{Case 1.2 }$LB(\bar{\phi})< (1+\epsilon)^{\omega}$
and
$LD(\bar{\tau}_j)\ge4(1+\epsilon)^{\omega}+2(1+\epsilon)^{-c_0}$.
Then $\bar{\tau}_j$ is infeasible and $\phi'=\phi_{\Gamma+1}$.

\noindent\textbf{Case 1.3 }$LB(\bar{\phi})\ge
(1+\epsilon)^{\omega}$. It can be easily verified that
$LB(\bar{\phi})<(1+\epsilon)^{2\omega}$. Notice that
$g_1(\bar{\tau}_j)$ is always feasible, thus
$\phi'=g_1(\bar{\phi})$, i.e., for each trimmed-state $\tau$ of
$\bar{\phi}$, we compute $g_1(\tau)$. Since $g_1(\tau)$ is always
feasible, they made up of a feasible trimmed-scenario $\phi'$.

\noindent\textbf{Remark. }There might be intersection between Case 1
and Case 3. Indeed, if $(1+\epsilon)^{\omega}\le
LB(\bar{\phi})<(1+\epsilon)^{\omega}+2(1+\epsilon)^{-c_0}$, and
$\bar{\tau}$ is feasible, then by adding $p_n$ the trimmed-scenario
$\phi$ changes into $\bar{\phi}=\phi'$ according to Case 1 and
$g_1(\phi')$ according to Case 3. Here both $\phi'$ and $g_1(\phi')$
are feasible trimmed-scenarios.

This is the only case that $\phi+p_n$ may yield two different
solutions. In the next section we will remove $\phi$ if both $\phi$
and $g_1(\phi)$ are feasible. By doing so $\phi+p_n$ yields a unique
solution, but currently we just keep both of them so that
Theorem~\ref{th:simulate} could be proved.

\noindent\textbf{Case 2. }If $(1+\epsilon)^{k\omega}\le
\mu<(1+\epsilon)^{(k+1)\omega}$ then again we take $\tau_j$ as a
state and compute
$\bar{\tau}_j=g_k(\tau_j)+p_n/(1+\epsilon)^{k\omega}$.

We re-compute $\phi$ as
$g_{k}(\phi)=(\hat{\xi}_1,\hat{\xi}_2,\cdots,\hat{\xi}_{\Lambda})$.
Then we replace one trimmed-state $g_k(\tau_j)$ with $\bar{\tau}_j$
and this generates $\phi'$. It is easy to verify that $\phi'$ is
feasible.

\noindent\textbf{Remark 2. }Notice that the number of possible
processing times of job $p_n$ could be infinite, however, we show
that it is possible to further restrict it to be some constant.

Let $p_n=(1+\epsilon)^{\mu}$. Let $\mu_0$ be the smallest integer
such that $(1+\epsilon)^{\mu_0}\ge 4(1+\epsilon)^{\omega+c_0+1}$. If
$\mu=k\omega+l$ with $k\ge \lceil \mu_0/\omega\rceil$ and $0\le l\le
\omega-1$, then $\phi$ is re-computed as $g_{k}(\phi)$. Notice that
for any feasible trimmed-state $\tau$, $LD(\tau)\le
4(1+\epsilon)^{\omega}+2(1+\epsilon)^{-c_0}<4(1+\epsilon)^{\omega+1}$,
thus $LD(g_k(\tau))\le (1+\epsilon)^{-c_0}$, which implies that
$g_k(\tau)=(0,0,\cdots,0)$ if $\tau=(0,0,\cdots,0)$ and
$g_k(\tau)=(1,0,0,\cdots,0)$ otherwise. Thus, $g_k(\phi)=g_{\lceil
\mu_0/\omega\rceil}(\phi)$.

The above analysis shows that by adding a job with processing time
$p_n=(1+\epsilon)^{k\omega+l}$ for $k\ge \lceil \mu_0/\omega\rceil$
and $0\le l\le \omega-1$ to any feasible trimmed-scenario $\phi$ is
equivalent to adding a job with processing time
$p_n=(1+\epsilon)^{\lceil \mu_0/\omega\rceil\omega+l}$ to $\phi$.

Thus, when adding a job to a trimmed-scenario, we may restrict that
$p_n\in R=\{0,(1+\epsilon)^{-c_0},\cdots,(1+\epsilon)^{\lceil
\mu_0/\omega\rceil+\omega-1}\}$.

\section{Simulating transformations between scenarios}
\label{ap:simulate} The whole section is devoted to prove the
following theorem.

\begin{theorem}
\label{th:simulate} Let $\phi$ be the simulating-scenario of a
feasible scenario $\psi$. If according to some online algorithm
$(T,\psi)$ changes to $(T',\bar{\psi})$ by adding a job $p_n\neq 0$,
then $\phi$ could be transformed to $\bar{\phi}$ ($\bar{\phi}\neq
\phi_0,\phi_{\Gamma}$) by adding a job $p_n'\in
R=\{0,(1+\epsilon)^{-c_0},\cdots,(1+\epsilon)^{\lceil
\mu_0/\omega\rceil+\omega-1}\}$ such that $\bar{\phi}$ is a
simulating-scenario of $\bar{\psi}$.
\end{theorem}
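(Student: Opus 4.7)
The plan is to proceed by a case analysis on the ratio $p_n/T$, paralleling the case structures laid out in Appendices \ref{ap:add job scenario} and \ref{ap:add job trimmed}. In each case I would exhibit a concrete $p_n'\in R$ and verify two properties: that $\bar\phi$ obtained by adding $p_n'$ to $\phi$ is feasible (hence distinct from $\phi_{\Gamma+1}$), and that under the natural one-to-one correspondence between machines, each updated trimmed-state $\bar\tau_h$ is a simulating-state of the corresponding updated state $\bar{st}_h$, meaning the non-first coordinates coincide and $\bar\nu_{-c_0}^h\in[\bar\eta_{-c_0}^h,\bar\eta_{-c_0}^h+2]$.

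I would organize the cases as follows: (a) $p_n/T\le(1+\epsilon)^{-c_0}$; (b) $p_n/T=(1+\epsilon)^\mu$ with $-c_0<\mu\le\omega$ and no rescaling is required; (c) the same range of $\mu$ but with rescaling by one factor of $(1+\epsilon)^\omega$ required (Case 1.3 of Appendix \ref{ap:add job trimmed}); and (d) $p_n/T>(1+\epsilon)^\omega$, forcing rescaling by $k\ge 1$ factors. In (a) I choose $p_n'=0$ whenever $\eta_{-c_0}^h+p_n/(T(1+\epsilon)^{-c_0})\le\nu_{-c_0}^h$, and $p_n'=(1+\epsilon)^{-c_0}$ otherwise; since the increment to $\eta_{-c_0}^h$ is at most $1$, the simulating bound is preserved in either sub-case. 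In (b) I set $p_n'=(1+\epsilon)^\mu$, which increments one integer coordinate by one on both sides, so the simulating property transfers trivially. Case (c) is case (b) followed by a single application of $f_1$ to $\psi$ and $g_1$ to $\phi$; the key estimate is that the argument of the ceiling in $g_1$ differs from the corresponding quantity in $f_1$ by at most $2(1+\epsilon)^{-\omega}\le 2/3$ (using $(1+\epsilon)^\omega\ge 3$), so the ceiling adds at most $1$ to the first coordinate and the discrepancy stays below $2$.

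For case (d) I would separately treat the sub-ranges $1\le k<\lceil\mu_0/\omega\rceil$ and $k\ge\lceil\mu_0/\omega\rceil$. In the first sub-range I set $p_n'=p_n/T$, which belongs to $R$ directly, and iterate the single-step estimate from case (c) together with Lemma \ref{le:state trans} to show that $f_k(\psi)$ and $g_k(\phi)$ remain in the simulating relation before $p_n'$ is added to the appropriate trimmed-state. In the second sub-range I invoke Remark 2 of Appendix \ref{ap:add job trimmed}: after at least $\lceil\mu_0/\omega\rceil$ rescalings every feasible trimmed-state has collapsed to $(0,\ldots,0)$ or $(1,0,\ldots,0)$, so $g_k(\phi)=g_{\lceil\mu_0/\omega\rceil}(\phi)$, and I replace $p_n$ by the truncated job of processing time $(1+\epsilon)^{\lceil\mu_0/\omega\rceil\omega+l}\in R$ where $l=\mu-k\omega$; the resulting $\bar\phi$ coincides with the one produced by the untruncated job, and the simulating-scenario property carries over.

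The main obstacle will be the bookkeeping in case (d): controlling how the first-coordinate discrepancy $\nu_{-c_0}-\eta_{-c_0}$ evolves through compositions of $f_1$ and $g_1$ and then through the subsequent job addition. The crucial quantitative fact is that a single round of $g_1$ inflates the discrepancy by at most one unit from the ceiling plus a geometric contribution bounded by $(\nu_{-c_0}-\eta_{-c_0})(1+\epsilon)^{-\omega}\le 2/3$, so the invariant $\nu_{-c_0}-\eta_{-c_0}\le 2$ is maintained at every step. Once these estimates are in place, $\bar\phi\neq\phi_0$ follows from $p_n\neq 0$ ensuring $\bar\phi$ contains at least one job, and $\bar\phi\neq\phi_{\Gamma+1}$ follows from the fact that $(T',\bar\psi)$, produced by an online algorithm of competitive ratio at most $2$, is itself feasible, combined with the uniform bound on the load gap between a scenario and its simulating trimmed-scenario.
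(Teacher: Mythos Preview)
Your proposal is correct and follows essentially the same approach as the paper: the paper isolates your iterated rescaling estimate (the bound $\nu_{-c_0}'-\eta_{-c_0}'\le 1+2(1+\epsilon)^{-\omega}<2$ after one application of $f_1$/$g_1$) as a standalone lemma proved by induction on $k$, and then runs the same two-part case split, with your cases (a)--(c) merged into its Case~1 and your case (d) as its Case~2. One small omission: your case (c) is phrased only for the range $-c_0<\mu\le\omega$, but a tiny job $p_n/T\le(1+\epsilon)^{-c_0}$ can in principle also push $LD(\psi)/m$ across the rescaling threshold; the paper handles this uniformly within Case~1 by defining $p_n'$ once (as you do in (a)) and then branching on whether $T'=T$ or $T'>T$, so you should fold that sub-case into (c) as well.
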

Let $\tau_{\theta(h)}$ in $\phi$ be the simulating-state of $st_h$
in $\psi$. Before we give the proof, we first present a lemma that
would be used later.
\begin{lemma}
\label{le:simu-g_k} Let $\phi$ be a simulating-scenario of $\psi$.
For any $k\ge 1$, if
$f_k(st_h)=(\eta_{-c_0}',\eta_{-c_0+1}',\cdots,\eta_{\omega}')$ and
$g_k(\tau_{\theta(h)})=(\nu_{-c_0}',\nu_{-c_0+1}',\cdots,\nu_{\omega}')$,
then $\nu_{i}'=\eta_i'$ for $i>-c_0$ and $\eta_{-c_0}'\le
\nu_{-c_0}'\le \eta_{-c_0}'+2$.
\end{lemma}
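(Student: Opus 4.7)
The plan is to prove Lemma \ref{le:simu-g_k} by induction on $k$, after first recording the semigroup identities $f_k = f_1 \circ f_{k-1}$ and $g_k = g_1 \circ g_{k-1}$. The second of these is essentially the definition of $g_k$; the first can be verified by a short direct computation, since shifting a state leftwards by $\omega$ bits twice (once summing indices $-c_0, \ldots, \omega-c_0$ into the first coordinate, then summing the result into the first coordinate again) rearranges into a single sum over indices $-c_0, \ldots, 2\omega-c_0$, which is exactly $f_2$. With these identities, it suffices to prove the following one-step statement: if $\nu_i = \eta_i$ for $i > -c_0$ and $\eta_{-c_0} \le \nu_{-c_0} \le \eta_{-c_0} + 2$, then the coordinates of $f_1(st)$ and $g_1(\tau)$ satisfy the same relation.

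For the non-first coordinates, both $f_1(st)$ and $g_1(\tau)$ read off $\eta_{i+\omega}$ and $\nu_{i+\omega}$ respectively (with the convention that entries beyond index $\omega$ are $0$). Since $\nu_j = \eta_j$ for every $j > -c_0$, and since these are integers (so $\lceil \cdot \rceil$ acts trivially on them), the equality $\nu'_i = \eta'_i$ for $i > -c_0$ is immediate. The real work is in the first coordinate. Using $\nu_j = \eta_j$ for $j > -c_0$ the two pre-ceiling values differ by
\[
f_1(\tau)_{-c_0} - f_1(st)_{-c_0} \;=\; \frac{(1+\epsilon)^{-c_0}(\nu_{-c_0}-\eta_{-c_0})}{(1+\epsilon)^{\omega-c_0}} \;=\; \frac{\nu_{-c_0}-\eta_{-c_0}}{(1+\epsilon)^{\omega}},
\]
which lies in $[0, 2/(1+\epsilon)^{\omega}] \subseteq [0, 2/3]$ by the hypothesis on $\nu_{-c_0}-\eta_{-c_0}$ and the choice of $\omega$ with $(1+\epsilon)^{\omega} \ge 3$. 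Since $g_1(\tau)_{-c_0} = \lceil f_1(\tau)_{-c_0}\rceil$ lies in $[f_1(\tau)_{-c_0}, f_1(\tau)_{-c_0}+1)$, we get $0 \le g_1(\tau)_{-c_0} - f_1(st)_{-c_0} < 1 + 2/3 < 2$, which is the required inequality.

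For the inductive step, the identity $g_{k+1}(\tau) = g_1(g_k(\tau))$ together with the inductive hypothesis (that $g_k(\tau_{\theta(h)})$ is a simulating-state of $f_k(st_h)$) lets me apply the one-step result to the pair $(f_k(st_h), g_k(\tau_{\theta(h)}))$, yielding that $g_1(g_k(\tau_{\theta(h)})) = g_{k+1}(\tau_{\theta(h)})$ is a simulating-state of $f_1(f_k(st_h)) = f_{k+1}(st_h)$. The main obstacle is purely in the base case: ensuring that after successive roundings the error in the first coordinate never exceeds $2$. The point is that the ceiling introduces at most $1$ of additional error while the shift contracts the pre-existing error of $\le 2$ by the factor $(1+\epsilon)^{-\omega} \le 1/3$, so the fresh error stays strictly below $5/3 < 2$ at every step, which is exactly why the choice $(1+\epsilon)^{\omega} \ge 3$ was baked into the definition of $SC$.
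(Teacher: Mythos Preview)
Your proposal is correct and takes essentially the same approach as the paper: both prove the case $k=1$ by bounding the first-coordinate discrepancy as at most $1 + 2(1+\epsilon)^{-\omega} < 2$ (the ceiling contributes at most $1$ and the inherited error contracts by $(1+\epsilon)^{-\omega}\le 1/3$), and then induct on $k$ via $f_{k+1}=f_1\circ f_k$ and $g_{k+1}=g_1\circ g_k$. The only cosmetic difference is that you make the semigroup identities explicit, whereas the paper simply says ``the proof is the same'' for the inductive step.
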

\begin{proof}
Let $st_h=(\eta_{-c_0},\eta_{-c_0+1},\cdots,\eta_{\omega})$ and
$\tau_{\theta(h)}=(\nu_{-c_0},\nu_{-c_0+1},\cdots,\nu_{\omega})$. We
first prove the lemma for $k=1$.

It is easy to verify that $\nu_i'=\eta_i'$ for $i>-c_0$.
Furthermore,
\begin{eqnarray*}
{\nu}_{-c_0}'&=&
\lceil \frac{\sum_{i=-c_0}^{\omega-c_0}(1+\epsilon)^i\nu_i}{(1+\epsilon)^{\omega-c_0}}\rceil\\
&\le& \frac{\sum_{i=-c_0+1}^{\omega-c_0}(1+\epsilon)^i\eta_i+(1+\epsilon)^{-c_0}(\eta_{-c_0}+2)}{(1+\epsilon)^{\omega-c_0}}+1\\
&\le & {\eta}_{-c_0}'+1+2(1+\epsilon)^{-\omega}< {\eta}_{-c_0}'+2
\end{eqnarray*}
Thus the lemma holds for $k=1$.

If the lemma holds for $k=k_0$, then it also holds for $k=k_0+1$.
The proof is the same.
\end{proof}
Now we come to the proof of Theorem \ref{th:simulate}.
\begin{proof}
Let $\psi=(st_1,st_2,\cdots,st_m)$ and
$\phi=(\xi_1,\xi_2,\cdots,\xi_{\Lambda})$. Recall that
$\tau_{\theta(i)}$ is the simulating-state of $st_i$ in $\phi$.

Notice that $LD(st_i)\le LD(\tau_{\theta(i)})\le
LD(st_i)+2(1+\epsilon)^{-c_0}$, it follows that $1/mLD(\psi)\le
1/mLD(\phi)\le 1/mLD(\psi)+2(1+\epsilon)^{-c_0}$. Meanwhile,
$P_{max}(\psi)=P_{max}(\phi)$ as long as $\psi\neq (0,0,\cdots,0)$.

Suppose job $n$ is assigned to machine $h$ in the real schedule. Let
$st_h=(\eta_{-c_0},\cdots,\eta_{\omega})$ and
$\tau_{\theta(h)}=(\nu_{-c_0},\cdots,\nu_{\omega})$. Recall that
$\eta_{-c_0}\le \nu_{c_0}\le \eta_{-c_0}+2$ and $\eta_{i}=\nu_i$ for
$i>c_0$.

There are two possibilities.

\noindent\textbf{Case 1. } $$p_n/T\le (1+\epsilon)^{\omega}.$$ Let
$st_h'=st_h+p_n/T=(\eta_{-c_0}',\cdots,\eta_{\omega}')$. We define
$p_n'$ in the following way.
\begin{itemize}
\item{If $p_n/T=(1+\epsilon)^{\mu}$ for $-c_0+1\le \mu\le\omega$, then $p_n'=p_n/T$.}
\item{If $p_n/T\le (1+\epsilon)^{-c_0}$,}
\begin{itemize}
\item{$\eta_{-c_0}'\le \nu_{-c_0}$, then $p_n'=0$.}
\item{$\eta_{-c_0}'> \nu_{-c_0}$, then $p_n'=(1+\epsilon)^{-c_0}$.}
\end{itemize}
\end{itemize}
Let $\tau_{\theta(h)}'+p_n'=(\nu_{-c_0}',\cdots,\nu_{\omega}')$,
then $\nu_{-c_0}'=\nu_{-c_0}+p_n'/((1+\epsilon)^{-c_0})$, in both
cases $\eta_{-c_0}'\le \nu_{-c_0}'\le \eta_{-c_0}'+2$.

By adding $p_n$ to $\psi$, the scaling factor may or may not be
changed.

If $T=T'$, the state of machine $h$ in $\bar{\psi}$ is $st_h'$. We
consider $\tau_{\zeta(h)}+p_n'$. Since
$LD(\tau_{\zeta(h)}+p_n')-LD(st_h')\le
LD(\tau_{\zeta(h)})-LD(st_h)\le 2(1+\epsilon)^{-c_0}$, and $st_h'$
is a feasible state, $\tau_{\zeta(h)}+p_n'$ is also a feasible
trimmed state. Meanwhile
$\max\{1/mLD(\psi'),P_{max}(\psi')\}<(1+\epsilon)^{\omega}$, thus by
adding $p_n'$ to $\phi$, the scaling factor of the trimmed-scenario
is also not updated, which implies that the trimmed-state of machine
$h$ in $\bar{\phi}$ is $\tau_{\zeta(h)}+p_n'$. It can be easily
verified that in this case, $\bar{\phi}$ is the simulating-scenario
of $\bar{\psi}$.

Otherwise $T'>T$ and the state of machine $h$ is $f_1(st_h')$ in
$\bar{\psi}$. We compute
$LB'=\max\{1/mLD(\phi)+p_n'/m,P_{max}(\phi),p_n'\}$. Since
$LB=\max\{1/mLD(\psi)+p_n/m,P_{max}(\psi),p_n\}>(1+\omega)^{\omega}$,
it follows directly that $LB'>(1+\omega)^{\omega}$. Meanwhile
$LB'<(1+\omega)^{2\omega}$, thus the trimmed-state of machine $h$ in
$\bar{\phi}$ is $g_1(\tau_{\zeta(h)}'+p_n')$.

We compare $st_h'$ and
$\tau_{\zeta(h)}'+p_n'=(\nu_{-c_0}',\cdots,\nu_{\omega}')$.
Obviously $\eta_{l}'=\nu_{l}'$ for $l\neq -c_0$ and $\eta_{c_0'}\le
\nu_{-c_0}'\le \eta_{-c_0}'+2$. According to Lemma
\ref{le:simu-g_k}, $g_1(\tau_{\zeta(h)}'+p_n')$ is a
simulating-state of $f_1(st_h')$, which implies that $\bar{\phi}$ is
a simulating-scenario of $\bar{\psi}$.

\noindent\textbf{Remark. }Recall that when $(1+\epsilon)^{\omega}\le
LB'<(1+\epsilon)^{\omega}+2(1+\epsilon)^{-c_0}$, $\phi+p_n'$ may
yield two solutions $\hat{\phi}$ and $g_{1}(\hat{\phi})$, as we have
claimed. Our above discussion chooses $\hat{\phi}$ if the scaling
factor of the real schedule does not change, and chooses
$g_1(\hat{\phi})$ when the the scaling factor of the real schedule
changes.

\noindent\textbf{Case 2. } For some $k\ge 1$,
$$(1+\epsilon)^{k\omega}\le p_n/T<(1+\epsilon)^{(k+1)\omega}.$$
Then we define $p_n'=p_n/T$ at first.

Let $f_{k}(st_h)=({\eta}_{-c_0}',\cdots,{\eta}_{\omega}')$,
$g_{k}(\tau_{\zeta(h)})=({\nu}_{-c_0}',\cdots,{\nu}_{\omega}')$,
then according to Lemma \ref{le:simu-g_k} we have
${\eta}_i'={\nu}_i'$ for $-c_0<i\le \omega$ and $\eta_{-c_0}'\le
\nu_{-c_0}'\le \eta_{-c_0}'+2$. Then it follows directly that
$g_{k}(\tau_{\zeta(h)})+p_n'$ is a simulating-state of
$f_{k}(st_h)+p_n$. Thus, by adding $p_n'$, $\bar{\phi}$ is a
simulating-scenario of $\bar{\psi}$.

Furthermore, if $p_n'>(1+\epsilon)^{\lceil
\mu_0/\omega\rceil+\omega-1}$, then suppose
$p_n'=(1+\epsilon)^{k'\omega+l}$ for some $k'\ge \lceil
\mu_0/\omega\rceil$ and $0\le l\le \omega-1$. Due to our previous
analysis, $p_n'$ could be replaced by a job with processing time
$p_n''=(1+\epsilon)^{\lceil \mu_0/\omega\rceil+l}$. The
trimmed-scenario $\phi$ still transforms into $\bar{\phi}$ by adding
$p_n''$.
\end{proof}

\section{Deletion of equivalent trimmed-scenarios}
\label{ap:delete} Recall that the addition $\phi+p_n$ may yield two
solutions, $\phi'$ and $g_1({\phi'})$ where both of them are
feasible. To make the result unique, $\phi'$ is deleted from $\Phi$
if $g_1(\phi')$ is feasible and $\Phi'$ is the set of the remaining
trimmed-scenarios.

We have the following simple lemma.

\begin{lemma}
If $\phi$ and $g_1(\phi)$ are both feasible trimmed-scenarios, then
$|\rho(\phi)-\rho(g_1(\phi))|\le O(\epsilon)$.
\end{lemma}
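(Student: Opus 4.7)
The plan is to show that $C_{max}$ and $OPT$ both scale by the common factor $(1+\epsilon)^{\omega}$ when $\phi$ is replaced by $g_1(\phi)$, up to an additive error of only $O(\epsilon)$, so that the ratio $C_{max}/OPT$ is preserved up to $O(\epsilon)$. First I would invoke Lemma~\ref{le:state trans} with $k=1$ on every trimmed-state $\tau_h$ appearing in $\phi$, obtaining $0\le (1+\epsilon)^{\omega} LD(g_1(\tau_h))-LD(\tau_h)\le 2(1+\epsilon)^{-c_0+\omega}$. Since $(1+\epsilon)^{c_0}\ge 1/\epsilon$ and $(1+\epsilon)^{\omega}\le 3(1+\epsilon)$ is bounded by a constant by the definitions of $c_0$ and $\omega$, this per-machine additive error is $O(\epsilon)$. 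Taking the maximum over the $m$ machines of $\phi$ then yields $|(1+\epsilon)^{\omega} C_{max}(g_1(\phi))-C_{max}(\phi)|\le O(\epsilon)$, since the max of two entrywise $O(\epsilon)$-close vectors differs by at most $O(\epsilon)$.

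Next I would compare $OPT(\phi)$ with $(1+\epsilon)^{\omega} OPT(g_1(\phi))$. Working in the original scale, the job multiset encoded by $g_1(\phi)$ agrees with that of $\phi$ on the upper-half positions $\omega-c_0+1,\ldots,\omega$ (which $g_1$ leaves unchanged), while all jobs at the lower-half positions $-c_0,\ldots,\omega-c_0$ in $\phi$ are replaced by integer units of the common size $(1+\epsilon)^{-c_0+\omega}$, with a per-trimmed-state rounding up of at most one unit. Starting from an optimal offline schedule for $\phi$, I replace on each machine the lower-half load with the appropriate number of coalesced units; this is feasible because every lower-half job has size at most $(1+\epsilon)^{-c_0+\omega}$, so in the reverse construction each coalesced unit provides enough room for any single lower-half job. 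The total number of coalesced units may differ from the number required by $g_1(\phi)$ by at most $m$, and those $m$ adjustments can be distributed one per machine, each contributing at most $(1+\epsilon)^{-c_0+\omega}\le O(\epsilon)$ to the makespan. A symmetric unpacking argument handles the reverse direction and yields $|(1+\epsilon)^{\omega} OPT(g_1(\phi))-OPT(\phi)|\le O(\epsilon)$.

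Finally I would combine the two estimates. Since $\phi$ and $g_1(\phi)$ are both feasible, $OPT(\phi), OPT(g_1(\phi))\ge 1$, and $C_{max}(\phi), C_{max}(g_1(\phi))\le 20$ as observed in the paper. From the identity
\[
\rho(\phi) - \rho(g_1(\phi)) = \frac{C_{max}(\phi)\cdot OPT(g_1(\phi)) - C_{max}(g_1(\phi))\cdot OPT(\phi)}{OPT(\phi)\cdot OPT(g_1(\phi))},
\]
the common factor $(1+\epsilon)^{\omega}$ in the numerator cancels when the two scaling estimates above are substituted, leaving an error of $O(\epsilon)$ in the numerator; the denominator is at least $1$, so the unrounded instant ratios differ by $O(\epsilon)$, and the final rounding of each to the nearest value in $\Delta$ contributes at most another $O(\epsilon)$. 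The main obstacle lies in the $OPT$ comparison: the rounding in $g_1$ is applied per online-machine while $OPT$ is free to repartition all jobs, so one must argue that the $O(m\epsilon)$ total extra mass introduced by the coalescing rounding can always be absorbed without inflating the makespan by more than $O(\epsilon)$. The crucial observation that rescues the argument is that the uniform coalesced unit size equals the maximum possible lower-half job size, so a one-unit-per-machine adjustment suffices in both directions.
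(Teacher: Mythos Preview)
The paper does not actually prove this lemma; it is stated as a ``simple lemma'' and left without argument. So there is no paper proof to compare against, and your proposal must stand on its own.

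Your approach is correct and is the natural way to flesh this out. The $C_{max}$ estimate follows directly from Lemma~\ref{le:state trans} with $k=1$, and the final combination step is clean once one uses $OPT(\phi),OPT(g_1(\phi))\ge 1$ (both are feasible) and the uniform bound $C_{max}\le 20$.

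The only place that deserves more precision is the $OPT$ comparison, and you have identified the real issue: the per-trimmed-state rounding in $g_1$ need not align with the offline optimum's partition. Your resolution is right, but the wording could be tightened. For the direction $OPT(\phi)\le (1+\epsilon)^{\omega}OPT(g_1(\phi))+O(\epsilon)$, take the optimal schedule for $g_1(\phi)$, keep the upper-half jobs in place, and pack the lower-half jobs of $\phi$ greedily into the machines, allowing one job of overflow per machine beyond the coalesced load $n_i(1+\epsilon)^{\omega-c_0}$ that machine $i$ carried; since every lower-half job has size at most $(1+\epsilon)^{\omega-c_0}$ and the total lower-half load of $\phi$ does not exceed the total coalesced load, this places everything with per-machine excess at most $(1+\epsilon)^{\omega-c_0}=O(\epsilon)$. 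For the other direction, from an optimal schedule for $\phi$ round each machine's lower-half load up to a multiple of $(1+\epsilon)^{\omega-c_0}$; both this count and the required count $\sum_h\lceil x_h\rceil$ lie in $[\sum_h x_h,\sum_h x_h+m]$, so they differ by at most $m$, and the discrepancy is absorbed by adding or removing at most one unit per machine. With these clarifications the argument goes through exactly as you outline.
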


With fewer trimmed-scenarios, Theorem~\ref{th:simulate} may not
hold, however, we have the following lemma.

\begin{lemma}
\label{le:delete trimmed-scenario} Suppose by releasing job $n$ with
$p_n\in R$ and scheduling it onto a certain machine, the feasible
trimmed-scenario $\phi$ changes to $\hat{\phi}$. Furthermore,
$g_{1}(\phi)$ is also feasible. Then there exists $p_n'\in R$ such
that by scheduling it on the same machine, $g_{1}(\phi)$ changes to
$\bar{\phi}$ and furthermore, either $\bar{\phi}=\hat{\phi}$ or
$\bar{\phi}=g_1(\hat{\phi})$.
\end{lemma}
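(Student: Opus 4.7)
The plan is to carry out a case analysis on the processing time $p_n$ that parallels the case structure of Appendix~\ref{ap:add job trimmed}, exhibiting an appropriate $p_n' \in R$ in each case and verifying that $\bar\phi$ equals either $\hat\phi$ or $g_1(\hat\phi)$. The guiding intuition is that $g_1$ is a uniform left ``bit-shift'' on the coordinates of states, so scaling the incoming job by the same factor $(1+\epsilon)^\omega$ should make the update-vs.-rescale diagram commute, up to possibly one extra application of $g_1$ on the right-hand side. The hypothesis that $g_1(\phi)$ is feasible is what makes the transformation on $g_1(\phi)$ well-defined at all.

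First I would dispatch the large-job regime $p_n = (1+\epsilon)^{k\omega+l}$ with $k \ge 1$ and $0 \le l \le \omega-1$. By Appendix~\ref{ap:add job trimmed} Case~2, $\hat\phi = g_k(\phi^\ast)$, where $\phi^\ast$ denotes the local update of $g_k(\phi)$ at the target machine. Setting $p_n' = (1+\epsilon)^{(k-1)\omega+l}$ (and clamping it back into $R$ via Remark~2 of Appendix~\ref{ap:add job trimmed} when this value still exceeds the $R$-threshold), the transformation on $g_1(\phi)$ triggers $k-1$ further rescalings. Since $g_k = g_{k-1}\circ g_1$ and the local update applied to the target machine is the same on both sides, the resulting $\bar\phi$ coincides with $\hat\phi$.

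Next, for the moderate-job regime $-c_0 \le \mu \le \omega$, I would follow the subcases of Appendix~\ref{ap:add job trimmed} Case~1. In Case~1.3 the transformation $\phi\to\hat\phi$ already involves one rescaling, so $\hat\phi = g_1(\phi^\ast)$ for a local update $\phi^\ast$ of $\phi$; taking $p_n'$ to be the $g_1$-rescaled value of $p_n$, clamped from below to $0$ or $(1+\epsilon)^{-c_0}$ whenever $\mu-\omega < -c_0$, should make $g_1(\phi) + p_n' = \hat\phi$ directly. In Case~1.1 no rescaling was triggered and $\hat\phi$ equals the local update itself; the analogous choice of $p_n'$ should then yield $g_1(\phi) + p_n' = g_1(\hat\phi)$. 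The infeasible Case~1.2 does not arise here since $\hat\phi$ is assumed to be a genuine feasible trimmed-scenario.

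The main obstacle will be the small-job instance of Case~1.1, where $p_n \le (1+\epsilon)^{-c_0}$ modifies only the first coordinate $\nu_{-c_0}$ of the target trimmed-state, while $g_1$ absorbs many small contributions into a single ceiling in its new first coordinate. I expect to reproduce the off-by-one style computation used in Lemma~\ref{le:simu-g_k}, balancing the choice $p_n' \in \{0,(1+\epsilon)^{-c_0}\}$ against whether the ceiling rounds up or not, so that the first coordinate of the target trimmed-state in $g_1(\phi) + p_n'$ matches that of $g_1(\hat\phi)$ exactly; the other trimmed-states are untouched by the update and agree trivially under $g_1$.
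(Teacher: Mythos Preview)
Your proposal is correct and follows essentially the same approach as the paper. Both arguments set $p_n'=(1+\epsilon)^{\mu-\omega}$ whenever $\mu-\omega\ge -c_0$, clamp to $\{0,(1+\epsilon)^{-c_0}\}$ via a ceiling comparison for smaller jobs, and invoke $g_k=g_{k-1}\circ g_1$ in the large-job regime; the only difference is organizational---the paper splits into three cases at the thresholds $\mu=\omega-c_0$ and $\mu=2\omega$, whereas you organize around the subcases of Appendix~\ref{ap:add job trimmed}.
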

\begin{proof}
Suppose job $n$ is scheduled onto a machine of trimmed-state
$\tau=(\nu_{-c_0},\cdots,\nu_{\omega})$ in $\phi$, then we put
$p_n'$ onto a machine of trimmed-state
$g_1(\tau)=(\nu_{-c_0}',\cdots,\nu_{\omega}')$ in $g_1(\phi)$. If
$p_n=0$ then obviously we can choose $p_n'=0$. Otherwise let
$p_n=(1+\epsilon)^{\mu}$ and there are three possibilities.

\noindent\textbf{Case 1. }$\mu\le\omega-c_0$.

If by adding $p_n$, the scaling factor of $\phi$ does not change,
then we compare $\nu_{-c_0}'=\lceil
\frac{\sum_{i=-c_0}^{k\omega-c_0}(1+\epsilon)^i\nu_i}{(1+\epsilon)^{\omega-c_0}}\rceil$
with $y=\lceil
\frac{\sum_{i=-c_0}^{k\omega-c_0}(1+\epsilon)^i\nu_i+(1+\epsilon)^{\mu}}{(1+\epsilon)^{\omega-c_0}}\rceil\le
\nu_{-c_0}'+1$. If $\nu_{-c_0}'=y$, then $p_n'=0$. Otherwise
$y=\nu_{-c_0}'+1$, then $p_n'=(1+\epsilon)^{-c_0}$. It can be easily
verified that $g_1({\tau})+p_n'=g_1(\tau+p_n)$ and
$g_1(\hat{\phi})=g_1({\phi})+p_n'$.

Otherwise by adding $p_n$ the scaling factor of $\phi$ increases,
then we define $p_n'$ in the same way and it can be easily verified
that $\hat{\phi}=g_1(\phi)+p_n'$.

\noindent\textbf{Case 2. }$\omega-c_0<\mu\le 2\omega$.

In this case we define $p_n'=(1+\omega)^{\mu-\omega}$ and the proof
is similar to the previous case.

Notice that in both case 1 and case 2, $p_n'\le
(1+\omega)^{\omega}$. As $LD(g_1(\tau))\le
4+2(1+\epsilon)^{-c_0-\omega}$, $LD(g_1(\tau))+p_n'\le
4(1+\epsilon)^{\omega}$, thus we can add $p_n'$ to $g_{1}(\tau)$
directly (without changing the scaling factor). Furthermore,
$\max\{1/m[LD(g_1({\phi}))+p_n'],P_{max}(g_1(g_1(\phi))),p_n'\}\le
(1+\epsilon)^{\omega}$, thus by adding $p_n'$ to $g_1(\phi)$, the
scaling factor does not change, thus in both cases,
$\bar{\phi}=g_1(\phi)+p_n'$.

\noindent\textbf{Case 3. }$\mu>2\omega$.

Suppose $\mu=k\omega+l$ with $k\ge 2$ and $0\le l\le \omega-1$. Then
$p_n'=(1+\epsilon)^{\mu-\omega}$. According to the definition of
$g_k$, $g_k(\phi)=g_{k-1}(g_1(\phi))$, thus $\bar{\phi}=\hat{\phi}$.

\end{proof}

Combining Theorem~\ref{th:simulate} and Lemma~\ref{le:delete
trimmed-scenario}, we have the following theorem.

\begin{theorem}
\label{th:simulate strong} Let $\phi\in \Phi'$ be the
simulating-scenario or shifted simulating-scenario of a feasible
scenario $\psi$. If according to some online algorithm $(T,\psi)$
changes to $(T',\bar{\psi})$ by adding a job $p_n\neq 0$, then
$\phi$ could be transformed to $\bar{\phi}\in \Phi'$
($\bar{\phi}\neq \phi_0,\phi_{\Gamma}$) by adding a job $p_n'\in
R=\{0,(1+\epsilon)^{-c_0},\cdots,(1+\epsilon)^{\lceil
\mu_0/\omega\rceil+\omega-1}\}$ such that $\bar{\phi}$ is a
simulating-scenario or shifted simulating-scenario of $\bar{\psi}$.
\end{theorem}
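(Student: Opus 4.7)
The plan is to combine Theorem \ref{th:simulate} and Lemma \ref{le:delete trimmed-scenario} by a case analysis on how $\phi \in \Phi'$ arises from the deletion procedure. Recall that $\Phi'$ is obtained by removing $\phi'$ from $\Phi$ whenever $g_1(\phi')$ is also feasible. Thus for any $\phi\in \Phi'$ that is a simulating-scenario (or shifted simulating-scenario) of $\psi$, exactly one of the following holds: either (i) $\phi$ is itself a simulating-scenario of $\psi$ and $g_1(\phi)$ is infeasible (so $\phi$ was not deleted), or (ii) there exists a scenario $\phi^\star \notin \Phi'$ with $\phi = g_1(\phi^\star)$, where $\phi^\star$ was a simulating-scenario of $\psi$ that got deleted because $g_1(\phi^\star)=\phi$ is feasible.

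In Case (i), I would simply invoke Theorem \ref{th:simulate} directly on $\phi$ and $\psi$. This produces $p_n'\in R$ and a transformed $\hat{\phi}$ which is a simulating-scenario of $\bar\psi$. If $\hat\phi\in \Phi'$, set $\bar\phi = \hat\phi$ and we are done. Otherwise $\hat\phi$ was deleted, meaning $g_1(\hat\phi)$ is feasible; then $g_1(\hat\phi)\in\Phi'$ is by definition a shifted simulating-scenario of $\bar\psi$, and we set $\bar\phi = g_1(\hat\phi)$.

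In Case (ii), I would first apply Theorem \ref{th:simulate} to $\phi^\star$ (a genuine simulating-scenario of $\psi$) to obtain some $p_n^\star\in R$ under which $\phi^\star$ transforms into a simulating-scenario $\hat{\phi}^\star$ of $\bar\psi$. Then I would use Lemma \ref{le:delete trimmed-scenario} (applicable because $g_1(\phi^\star)=\phi$ is feasible) with input job $p_n^\star$, to produce $p_n'\in R$ such that adding $p_n'$ on the corresponding machine of $\phi = g_1(\phi^\star)$ yields $\bar\phi$ which equals either $\hat\phi^\star$ or $g_1(\hat\phi^\star)$. In the first subcase $\bar\phi$ is already a simulating-scenario of $\bar\psi$; in the second subcase $\bar\phi$ is a shifted simulating-scenario of $\bar\psi$ by definition. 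As in Case (i), if $\bar\phi\notin \Phi'$, replace it by $g_1(\bar\phi)$, which lands in $\Phi'$ and is a shifted simulating-scenario of $\bar\psi$. Finally, one must verify that $\bar\phi\neq \phi_0,\phi_{\Gamma+1}$: the first is immediate since $p_n\neq 0$ makes $\bar\psi$ nonempty, and the second follows from the fact that the online algorithm has competitive ratio at most $2$, so the transformation never leaves the feasible region (this is exactly the bound used in the lemma before Theorem \ref{th:stage bound}).

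The main obstacle is the bookkeeping in Case (ii): I need to be sure that the job $p_n^\star$ produced by Theorem \ref{th:simulate} lies in the domain where Lemma \ref{le:delete trimmed-scenario} applies (i.e.\ $p_n^\star\in R$), and that the "same machine" condition in Lemma \ref{le:delete trimmed-scenario} correctly lines up with the machine chosen by the online algorithm. Both are ensured by the construction of $p_n^\star$ in the proof of Theorem \ref{th:simulate}, which always puts the added job on the machine whose trimmed-state simulates $st_h$. A secondary subtlety is that after the final replacement $\bar\phi\mapsto g_1(\bar\phi)$, the relation "simulating-scenario of $\bar\psi$" might degrade to "shifted simulating-scenario of $\bar\psi$"; this is precisely why the statement allows either notion, so the induction can be carried along the entire online trajectory without breaking.
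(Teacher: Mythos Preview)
Your proposal is correct and follows the same approach as the paper. The paper's own proof is in fact a single sentence stating that the theorem follows by ``combining Theorem~\ref{th:simulate} and Lemma~\ref{le:delete trimmed-scenario}''; your case analysis (Case~(i) when $\phi$ is a genuine simulating-scenario, Case~(ii) when $\phi=g_1(\phi^\star)$ for a deleted $\phi^\star$) is precisely how that combination is meant to be unpacked. One small remark: in Case~(ii) you write ``if $\bar\phi\notin\Phi'$, replace it by $g_1(\bar\phi)$''---this replacement is never needed when $\bar\phi=g_1(\hat\phi^\star)$, since $g_1$ of any feasible trimmed-scenario has $LB<1+O((1+\epsilon)^{-c_0})$, forcing $g_1(g_1(\hat\phi^\star))$ to be infeasible and hence $g_1(\hat\phi^\star)\in\Phi'$; so you never end up with a ``doubly shifted'' scenario outside the theorem's conclusion.
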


\section{The nearly optimal strategies for the adversary and the scheduler}
\subsection{The nearly optimal strategy for the adversary}
\label{ap:strategy adv} We prove in this subsection that, by
releasing at most $n_0$ jobs, the adversary can ensure that there is
no online algorithm whose competitive ratio is less than
$\rho^*-O(\epsilon)$.

We play the part of the adversary.

Consider $G_{n_0}$. Notice that
$\rho^*=\rho_{n_0}(s_0^0)=\max_j\{\rho_{n_0}(a_{0,j}^{0}):a_{0,j}^{0}\in
N(s_0^{0})\}$, thus there exists some $j_0$ such that
$a_{0,j_0}^{0}\in N(s_0^{0})$ and
$\rho_{n_0}(a_{0,j_0}^{0})=\rho^*$.

We release a job with processing time $\alpha_{j_0}$. Suppose due to
any online algorithm whose competitive ratio is no greater than $2$,
this job is scheduled onto a certain machine so that the scenario
becomes $\psi$, then according to Theorem \ref{th:simulate} and the
construction of the graph, there exists some $s_k^1$ incident to
$a_{0,j_0}^{0}$ such that either $\phi_k$ is a simulating-scenario
of $\psi$, or $\phi_k$ is a shifted simulating-scenario of $\psi$.
As $\rho_{n_0}(a_{0,j_0}^{0})=\min_k\{\rho_{n_0}(s_k^1):s_k^1\in
N(a_{0,j_0}^{0})\}$, it follows directly that $\rho_{n_0}(s_k^1)\ge
\rho^*$. If $\rho(\phi_k)=\rho_{n_0}(s_k^1)\ge \rho^*$, then we stop
and it can be easily seen that the instant approximation ratio of
$\psi$ is at least $\rho^*-O(\epsilon)$ (by Lemma
\ref{le:approx-simu-scenario}). Otherwise we go on to release jobs.

Suppose after releasing $h-1$ jobs the current scenario is $\psi$
and $\phi_i$ is its simulating-scenario or shifted
simulating-scenario, furthermore, $\rho_{n_0}(s_i^{h-1})\ge \rho^*$.
As
$\rho^*\le\rho_{n_0}(s_i^{h-1})=\max_j\{\rho_{n_0}(a_{ij}^{h-1}):a_{ij}^{0}\in
N(s_i^{h-1})\}$, thus there exists some $j_0$ such that
$a_{ij_0}^{h-1}\in N(s_i^{h-1})$ and $\rho_{n_0}(a_{ij_0}^{0})\ge
\rho^*$.

We release the $h$-th job with processing time $\alpha_{j_0}$. Again
suppose this job is scheduled onto a certain machine so that the
scenario becomes $\psi'$, then there exists some $s_k^{h}$ incident
to $a_{ij_0}^{h-1}$ such that $\phi_k$ is either a
simulating-scenario or a shifted simulating-scenario of $\psi'$. As
$\rho_{n_0}(a_{ij_0}^{h-1})=\min_k\{\rho_{n_0}(s_k^{h}):s_k^h\in
N(a_{ij_0}^{0})\}$, it follows directly that $\rho_{n_0}(s_k^h)\ge
\rho^*$. If $\rho(\phi_k)=\rho_{n_0}(s_k^h)\ge \rho^*$, then we stop
and it can be easily seen that the instant approximation ratio of
$\psi'$ is at least $\rho^*-O(\epsilon)$. Otherwise we go on to
release jobs.

Since $\rho(\phi_k)=\rho_{n_0}(s_k^{n_0})$, we stop after releasing
at most $n_0$ jobs.

\subsection{The nearly optimal online algorithm}
\label{ap:strategy sche} We play the part of the scheduler.

Notice that
$$\rho_{n_0+1}(a_{ij}^{0})=\min_k\{\rho_{n_0+1}(s_k^1):s_k^1\in N(a_{ij}^{0})\}=\min_k\{\rho(s_k^1):s_k^1\in N(a_{ij}^{0})\},$$
$$\rho(s_i^0)=\rho_{n_0+1}(s_i^{0})=\max_j\{\rho_{n_0+1}(a_{ij}^{0}):a_{ij}^{0}\in N(s_i^{0})\}.$$

Suppose the current scenario is $\psi$ with scaling factor $T$. Let
$\phi_i\in \Phi'$ be its simulating-scenario or shifted
simulating-scenario, and furthermore, $\rho(s_i^0)\le \rho^*$.

Let $p_n$ be the next job the adversary releases. We apply lazy
scheduling first, i.e., if by scheduling $p_n$ onto any machine,
$\psi$ changes to $\psi'$ (the scaling factor does not change) while
$\phi_i$ is still a simulating-scenario or shifted
simulating-scenario of $\psi'$, we always schedule $p_n$ onto this
machine.

Otherwise, According to Theorem \ref{th:simulate} and Lemma
\ref{le:delete trimmed-scenario}, $p_n'(h)$ could be constructed
such that if $\psi$ changes to $\psi'$ by adding $p_n$ to machine
$h$, then $\phi$ changes to $\phi'$ by adding $p_n'$ to the same
machine such that $\phi'$ is a simulating-scenario or shifted
simulating-scenario of $\psi'$. Notice that the processing time of
$p_n'(h)$ may also depend on the machine $h$.

We show that, if $p_n$ could not be scheduled due to lazy
scheduling, then $p_n'(h)=p_n'$ for every $h$. To see why, we check
the proofs of Theorem \ref{th:simulate} and Lemma \ref{le:delete
trimmed-scenario}. We observe that, if $p_n'(h)\ge
(1+\epsilon)^{-c_0+1}$ for some $h$, then $p_n'(h)=p_n'$ for every
$h$ (the processing time $p_n'(h)$ only depends on $p_n/T$).
Otherwise, it might be possible that $p_n'(h_1)=0$ for some $h_1$
while $p_n'(h_2)=(1+\epsilon)^{-c_0}$ for another $h_2$. However, if
this is the case then $p_n$ should be scheduled on machine $h_1$
according to lazy scheduling, which is a contradiction. Thus,
$p_n'(h)=(1+\epsilon)^{-c_0}$ for every $h$.

Now we decide according to $G_{n_0+1}$ which machine $p_n$ should be
put onto.

As $p_n'\in R$, let $\alpha_{j_0}=p_n'$, then we consider
$\rho_{n_0+1}(a_{ij_0}^{0})=\min_k\{\rho_{n_0+1}(s_k^1):s_k^1\in
N(a_{ij_0}^{0})\}$. Recall that $\rho(s_i^0)\le \rho^*$ according to
the hypothesis, then $\rho_{n_0+1}(a_{ij_0}^{0})\le \rho^*$, which
implies that there exists some $s_{k_0}^1$ incident to $a_{ij_0}$
such that $\rho_{n_0+1}(s_{k_0}^1)=\rho(s_{k_0}^0)\ge \rho^*$. Thus,
we can schedule $p_n'$ to a certain machine, say, machine $h_0$, so
that $\phi_i$ transforms to $\phi_{k_0}$. And thus in the real
schedule we schedule $p_n$ onto machine $h_0$. Let $\psi'$ be the
current scenario, then $\phi_{k_0}$ is its simulating-scenario or
shifted simulating-scenario with $\rho(s_{k_0}^0)\le \rho^*$.

Thus, we can always carry on the above procedure. Since the instant
approximation ratio of each simulating-scenario or shifted
simulating-scenario is no greater than $\rho^*$, the instant
approximation ratio of the corresponding scenario is also no greater
than $\rho^*+O(\epsilon)$.

\section{Extensions}
\label{ap:extension} We show in this section that our method could
be extended to provide approximation schemes for various problems.
Specifically, we consider $Rm||C_{max}$, $Rm||\sum_hC_h^p$ for some
constant $p\ge 1$ (and as a consequence $Qm||C_{max}$,
$Qm||\sum_hC_h^p$ and $Pm||\sum_hC_h^p$ could also be solved). We
mention that, if we restrict that the number of machines $m$ is a
constant (as in the case $Rm||C_{max}$ and $Rm||\sum_hC_h^p$), then
our method could be simplified.

We also consider the semi-online model $P|p_j\le q|C_{max}$ where
the processing time of each job released is at most $q$. In this
case an optimal algorithm could be derived in $(mq)^{O(mq)}$ time.
Notice that our previous discussions focus on finding nearly optimal
online algorithms, however, for online problems, we do not know much
about optimal algorithms. Only the special cases $P2||C_{max}$ and
$P3||C_{max}$ are known to admit optimal algorithms. Unlike the
corresponding offline problems which always admit exact algorithms
(sometimes with exponential running times), we do not know whether
there exists such an algorithm for online problems. Consider the
following problem, does there exist an algorithm which determines
whether there exists an online algorithm for $P||C_{max}$ whose
competitive ratio is no greater than $\rho$. We do not know which
complexity class this problem belongs to. An exact algorithm, even
with running time exponential in the input size, would be of great
interest.

\noindent\textbf{Related work.} For the objective of minimizing the
makespan on related and unrelated machines, the best known results
are in table~\ref{ta:load}. There is a huge gap between the upper
bound and lower bound except for the special case $Q_2||C_{max}$.
However, the standard technique for $Q_2||C_{max}$ becomes extremely
complicated and can hardly be extended for $3$ or more machines.

For the objective of $(\sum_h C_h^p)^{1/p}$, i.e., the $L_p$ norm,
not much is known. See table~\ref{ta:load} for an overview. We
further mention that when $p=2$, List Scheduling is of competitive
ratio $\sqrt{4/3}$~\cite{AvAzSg01}.

\begin{table}[!h]
\begin{center}
\caption{Lower and upper bounds on the competitive ratio for
deterministic}\label{ta:load}
\begin{tabular}{|c|c|c|}
  \hline
  problems & lower bounds & upper bounds \\ \hline
  $Q||C_{\max}$ & 2.564 ~\cite{esgallWAOA11} & 5.828 ~\cite{BeChKa00} \\
  $Q2||C_{\max}$ & $(2s+1)/(s+1)$ for $s \le 1.61803$ & $(2s+1)/(s+1)$ for $s \le 1.61803$, \\
                 & $1+ 1/s$ for $s \ge 1.61803$~\cite{ChoSah80} & $1+ 1/s$ for $s \ge 1.61803$~\cite{ChoSah80}\\
  $R||C_{\max}$ & $\Omega(\log m)$~\cite{azar1992assign} & $O(\log m)$~\cite{aspnes1997line} \\   \hline
  $P||(\sum_h C_h^p)^{1/p}$ & & $2 - \Theta(\ln p/p)$~\cite{AvAzSg01} \\
  $R||(\sum_h C_h^p)^{1/p}$ & & $O(p)$~\cite{awerbuch1995load} \\ \hline
\end{tabular}
\end{center}
\end{table}

Much of the previous work is directed for semi-online models of
scheduling problems where part of the future information is known
beforehand, and most of them assume that the total processing time
of jobs (instead of the largest job) is known. For such a model, the
best known upper bound is 1.6~\cite{cheng2005semi} and the best
known lower bound is 1.585~\cite{albers2012semi}.

\subsection{$Rm||C_{max}$}
In this case, we can restrict beforehand that the processing time of
each job, say, $j$, on machine $h$ ($1\le h\le m$) is
$p_{jh}\in\{(1+\epsilon)^k:k\ge0,k\in \mathbb{N}\}$. There is a
naive algorithm $Al_0$ that puts every job on the machine with the
least processing time, and it can be easily seen that the
competitive ratio of this algorithm is $m$. Since $m$ is a constant,
it is a constant competitive ratio online algorithm, and thus we may
restrict on the algorithms whose competitive ratio is no greater
than $m$.

Given any real schedule, we may first compute the makespan of the
schedule by applying $Al_0$ on the instance and let it be
$Al_0(C_{max})$, then we define $LB=Al_0(C_{max})/m$ and find a
scaling factor $T\in SC$ such that $T\le LB<T(1+\epsilon)^{\omega}$.
Similarly as we do in the previous sections, we can then define a
state for each machine of the real schedule with respect to $T$ and
then a scenario by combining the $m$ states. Since $OPT\le
mT(1+\epsilon)^{\omega}$, if the real schedule is produced by an
online algorithm whose competitive ratio is no greater than $m$,
then the load of each machine is bounded by
$m^2T(1+\epsilon)^{\omega}$, and this allows us to bound the number
of different feasible states by some constant, and the number of all
different feasible scenarios is also bounded by a constant
(depending on $m$ and $1/\epsilon$).

We can then define trimmed-states and trimmed-scenarios in the same
way as before. Specifically, a trimmed-state is combined of $m$
trimmed-states directly (it is much simpler since the number of
machines is a constant). Again, a feasible trimmed-state is a
trimmed-state whose load could be slightly larger than
$m^2T(1+\epsilon)^{\omega}$ (to include two additional small jobs),
and a feasible trimmed-scenario is a trimmed-scenario such that
every trimmed-state is feasible.

Transformations between scenarios and trimmed-scenarios are exactly
the same as before and we can also construct a graph to characterize
the transformations between trimmed-scenarios, and use it to
approximately characterize the transformation between scenarios. All
the subsequent arguments are the same.

\subsection{$Rm||\sum_{h}C_h^p$ when $p\ge 1$ is a constant}
Here $C_h$ denotes the load of machine $h$.

Again we can restrict beforehand that the processing time of each
job, say, $j$, on machine $h$ ($1\le h\le m$) is
$p_{jh}\in\{(1+\epsilon)^k:k\ge0,k\in \mathbb{N}\}$. Consider the
naive algorithm $Al_0$ that puts every job on the machine with the
least processing time and let $C_h(Al_0)$ be the load of machine $h$
due to this algorithm. Since $x^p$ is a convex function, we know
directly that $OPT\ge m(\frac{\sum_{h=1}^mC_{h}(Al_0)}{m})^{p}\ge
\frac{\sum_{h=1}^mC_h(Al_0)^p}{m}$ and thus the competitive ratio of
$Al_0$ is also $m$ and again we may restrict on the algorithms whose
competitive ratio is no greater than $m$.

Given any real schedule, we may first compute the objective function
of the schedule by applying $Al_0$ on the instance and let it be
$Al_0(\sum_{h}C_h^p)$, then we define
$LB=[Al_0(\sum_{h}C_h^p)/m]^{1/p}$ and find a scaling factor $T\in
SC$ such that $T\le LB<T(1+\epsilon)^{\omega}$. Consider any
schedule produced by an online algorithm whose competitive ratio is
no greater than $m$, then its objective value should be bounded by
$mAl_0(\sum_{h}C_h^p)$, which implies that the load of each machine
in this schedule is bounded by
$[mAl_0(\sum_{h}C_h^p)]^{1/p}=m^{2/p}LB$. Again using the fact that
$m$ is a constant, we can then define a state for each machine of
the real schedule with respect to $T$ and then a scenario by
combining the $m$ states. Trimmed-states and trimmed-scenarios are
defined similarly, all the subsequent arguments are the same as the
previous subsection.

\noindent\textbf{Remark. } Our method, however, could not be
extended in a direct way to solve the more general model
$Rm||\sum_{h}f(C_h)$ if the function $f$ fails to satisfy the
property that $f(ka)/f(kb)=f(a)/f(b)$ for any $k>0$. This is because
we neglect the scaling factor when we construct the graph $G$ and
compute the instant approximation ratio for each trimmed-scenario.
Indeed, the instant approximation ratio is not dependent on the
scaling factor for all the objective functions (i.e., $C_{max}$ and
$\sum_h C_h^p$) we consider before, however, if such a property is
not satisfied, then the instant approximation ratio depends on the
scaling factor and our method fails.

\subsection{$P|p_{j}\le q|C_{max}$}

We show in this subsection that, the semi-online scheduling problem
$P|p_j\le q| C_{max}$ in which the largest job is bounded by some
integer $\zeta$ (the value $q$ is known beforehand), admits an exact
online algorithm.

Again we use the previous framework to solve this problem. The key
observation is that, in such a semi-online model, we can restrict
our attentions only on bounded instances in which the total
processing time of all the jobs released by the adversary is bounded
by $2m\zeta$. It is easy to verify that, if we only consider bounded
instances, then we can always use a $\zeta$-tuple to represent the
jobs scheduled on each machine. This is the state for a machine and
there are at most $(2mq)^{q}$ different states. Combining the $m$
states generates scenarios, and there are at most $(2mq)^{mq}$
different scenarios, and thus we can construct a graph to represent
the transformations between these scenarios and find the optimal
online algorithm using the same arguments.

We prove the above observation in the following part of this
subsection.

We restrict that $q\ge 2$ since we assume that the processing time
of each job is some integer, and $q=1$ would implies that the
adversary only releases jobs of processing time $1$, and list
scheduling is the optimal algorithm.

When $q\ge 2$, we know that the competitive ratio of any online
algorithm is no less than $1.5$. To see why, suppose there are only
two machines and the adversary releases at first two jobs, both of
processing time $1$. Any online algorithm that puts the two jobs on
the same machine would have a competitive ratio at least $2$.
Otherwise suppose an online algorithm puts the two jobs on separate
machines, then the adversary releases a job of processing time $2$,
and it can be easily seen that the competitive ratio of this online
algorithm is at least $1.5$.

We use $I$ to denote a list of jobs released by the adversary (one
by one due to the sequence), and this is an instance. We use $LD(I)$
to denote the total processing time of jobs in $I$. Let $\Omega$ be
the set of all instances and $\Omega_B=\{I|LD(I)/m\le 2p\}$ be the
set of bounded instances. Let $A$ be the set of all the online
algorithms. Let $Al\in A$ be any online algorithm, it can be easily
seen that its competitive ratio $\rho_{Al}$ is defined as
$$\rho_{Al}=\sup_{I\in \Omega}\frac{Al(I)}{OPT(I)},$$
where $OPT(I)$ is the makespan of the optimal (offline) solution for
the instance $I$ and $Al(I)$ is the makespan of the solution
produced by the algorithm.

The goal of this subsection is to find an algorithm $Al^*$ such that
$$\rho_{Al^*}=\inf_{Al\in A}\sup_{I\in \Omega}\frac{Al(I)}{OPT(I)}.$$

On the other hand, according to our previous discussion, we can find
an algorithm $Al_B^*$ such that
$$\rho_{Al_B^*}=\inf_{Al\in A}\sup_{I\in \Omega_B}\frac{Al(I)}{OPT(I)}.$$

Notice that when we restrict our attentions on bounded instances,
the algorithm we find may be only defined for $I\in \Omega_B$, we
extend it to solve all the instances in the following way. We use
$LS$ to denote the list scheduling. Given any algorithm $Al$ which
can produce a solution for any instance $I\in \Omega_B$, we use
$Al\circ LS$ to denote the LS-composition of this algorithm where
the algorithm $Al\circ LS$ operates in the following way.

Recall that $I\in \Omega$ is a list of jobs and let it be
$(p_1,p_2,\cdots,p_n)$ where $p_j\ge 1$. If $I\in \Omega_B$, then
$Al\circ LS$ schedules jobs in the same way as $Al$. Otherwise let
$j_0$ be the largest index such that $\sum_{j=1}^{j_0}p_j\le
2m\zeta$, $Al\circ LS$ schedules job $1$ to job $j_0$ in the same
way as $Al$, and schedules the subsequent jobs according to list
scheduling, i.e., when $p_j$ ($j>j_0$) is released, we put this job
onto the machine with the least load currently.

Thus, the algorithm $Al\circ LS$ could be viewed as a combination of
$Al$ and $LS$, and we only require that $Al$ is defined for
instances of $\Omega_B$.

\begin{lemma}
\label{le:circ}For any $Al\in A$,
$$\rho_{Al\circ LS}\le \sup_{I\in \Omega_B}\frac{Al(I)}{OPT(I)}.$$
\end{lemma}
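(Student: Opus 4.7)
The plan is to set $\rho_B := \sup_{I \in \Omega_B} Al(I)/OPT(I)$ and to prove the pointwise bound $(Al\circ LS)(I) \le \rho_B \cdot OPT(I)$ for every $I \in \Omega$; taking the supremum then yields the lemma. I would split on whether the instance is bounded. If $I \in \Omega_B$, the composite algorithm $Al\circ LS$ by construction mimics $Al$ on every released job, so $(Al\circ LS)(I) = Al(I) \le \rho_B \cdot OPT(I)$ and the inequality is immediate.

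The non-trivial case is $I \notin \Omega_B$, where the slack needed for list scheduling becomes available: $LD(I) > 2mq$ forces $OPT(I) \ge LD(I)/m > 2q$. I would denote by $I_0 = (p_1,\dots,p_{j_0})$ the prefix handled by $Al$; by definition $I_0 \in \Omega_B$, and since restricting any optimal schedule for $I$ to the jobs of $I_0$ is feasible for $I_0$, one has the prefix monotonicity $OPT(I_0) \le OPT(I)$. Let $h^\ast$ be a machine realising the final makespan $C^\ast = (Al\circ LS)(I)$, and distinguish whether any job of index greater than $j_0$ (i.e., from the $LS$ phase) is placed on $h^\ast$.

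If no $LS$ job lands on $h^\ast$, then $C^\ast$ is already the load of $h^\ast$ at the end of the $Al$ phase, so $C^\ast \le Al(I_0) \le \rho_B \cdot OPT(I_0) \le \rho_B \cdot OPT(I)$. Otherwise let $j^\ast > j_0$ be the last $LS$-phase job assigned to $h^\ast$; the $LS$ selection rule ensures that just before $j^\ast$ was placed, $h^\ast$ had minimum load, so its load at that moment was at most the final load of every other machine. Hence $m(C^\ast - p_{j^\ast}) \le LD(I)$, giving $C^\ast \le LD(I)/m + q \le OPT(I) + q$. Combining this with $OPT(I) > 2q$ yields $C^\ast \le \tfrac{3}{2}\,OPT(I)$, and invoking the observation made just before the lemma that for $q \ge 2$ every online algorithm has competitive ratio at least $3/2$ even on bounded instances, we have $\rho_B \ge 3/2$, and therefore $C^\ast \le \rho_B \cdot OPT(I)$ as required.

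The main subtlety I anticipate is that the argument is not a direct citation of Graham's $2-1/m$ bound, because $LS$ is launched from the non-empty configuration produced by $Al$ rather than from the all-zero state; the extra ingredients needed are the prefix monotonicity $OPT(I_0) \le OPT(I)$ and the estimate $OPT(I) > 2q$ available only on unbounded instances, together with the $3/2$ lower bound on $\rho_B$, which conspire to absorb the additive $LS$-error $q$ into $\rho_B \cdot OPT(I)$.
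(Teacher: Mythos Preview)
Your proof is correct and follows essentially the same approach as the paper: split on whether $I\in\Omega_B$, use prefix monotonicity $OPT(I_0)\le OPT(I)$ in the bounded-prefix case, apply the standard list-scheduling estimate $C^\ast\le LD(I)/m+q$ when an LS-phase job determines the makespan, and absorb the resulting $3/2$ ratio using the previously established lower bound $\rho_B\ge 3/2$. The only cosmetic difference is that the paper splits on whether $Al\circ LS(I)=Al(I_B)$ rather than on whether an LS job lands on the makespan machine, but the two decompositions lead to the same two subcases and the same bounds.
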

\begin{proof}
Consider $I=(p_1,p_2,\cdots,p_n)\not\in \Omega_B$ and suppose $j_0$
is the largest index such that $\sum_{j=1}^{j_0}p_j\le 2mp$. Let
$I_B=(p_1,p_2,\cdots,p_{j_0})$, then obviously $OPT(I)\ge OPT(I_B)$.

Consider $Al\circ LS(I)$. If $Al\circ LS(I)=Al(I_B)$, then obviously
$Al\circ LS(I)/OPT(I)\le Al(I_B)/OPT(I_B)$.

Otherwise $Al\circ LS(I)>Al(I_B)$, and let $h>j_0$ be the job whose
completion time achieves $Al\circ LS(I) $. Since $h$ is scheduled
due to the LS-rule, we know that $Al\circ LS(I)\le LD(I)/m+p_h$.
Notice that $OPT\ge LD(I)/m\ge 2p$, thus $Al\circ LS(I)/OPT(I)\le
1.5$. Thus we have
$$\rho_{Al\circ LS}\le \max\{\sup_{I_B\in \Omega_B}Al(I_B)/OPT(I_B),1.5\}.$$
Recall that we have shown in the previous discussion that
$\sup_{I_B\in \Omega_B}Al(I_B)/OPT(I_B)\ge 1.5$, thus $\rho_{Al\circ
LS}\le \sup_{I\in \Omega_B}\frac{Al(I)}{OPT(I)}$.
\end{proof}

The above lemma shows that $\rho_{Al_B^*\circ LS}\le\rho_{Al_B^*}$.
Meanwhile it is easy to see that $\rho_{Al_B^*\circ
LS}\ge\rho_{Al_B^*}$, thus $\rho_{Al_B^*\circ LS}=\rho_{Al_B^*}$.

We prove in the following part that $\rho_{Al_B^*}=\rho_{Al^*}$, and
thus $Al_B^*\circ LS$ is the best algorithm for the semi-online
problem.

Obviously $\sup_{I\in \Omega}Al(I)/OPT(I)\ge \sup_{I\in
\Omega_B}Al(I)/OPT(I)$, thus $\rho_{Al^*}\ge \rho_{Al_B^*}$.

On the other hand, let $A\circ LS=\{Al\circ LS:Al\in A\}\subset A$,
$$\inf_{Al\in A}\rho_{Al}\le \inf_{Al\in A\circ LS}\rho_{Al\circ LS}.$$
According to Lemma \ref{le:circ}, for any $I\in \Omega$,
$$\inf_{Al\in A\circ LS}\rho_{Al\circ LS}\le \inf_{Al\in A\circ
LS}\sup_{I\in \Omega_B}\frac{Al(I)}{OPT(I)}=\inf_{Al\in A}\sup_{I\in
\Omega_B}\frac{Al(I)}{OPT(I)},$$ thus $\rho_{Al^*}\le
\rho_{Al_B^*}$, which implies that $\rho_{Al^*}= \rho_{Al_B^*}$.

\end{appendix}

\end{document}